\pgfplotsset{}
\newtheorem{theorem}{Theorem}
\newtheorem{construction}{Construction}
\newtheorem{definition}{Definition}
\newtheorem{proposition}{Proposition}
\newcommand{\pref}{\succ}
\newcommand{\fixlist}{\addtolength{\itemsep}{-2pt}}
\newcommand{\score}{{{\mathrm{score}}}}
\crefname{subsection}{Section}{Sections}
\crefname{section}{Section}{Sections}
\crefname{table}{Table}{Tables}
\crefname{figure}{Figure}{Figures}
\crefname{algorithm}{Algorithm}{Algorithms}
\crefname{theorem}{Theorem}{Theorems}
\crefname{definition}{Definition}{Definitions}
\crefname{corollary}{Corollary}{Corollaries}
\crefname{proposition}{Proposition}{Propositions}
\crefname{obs}{Observation}{Observations}
\crefname{lemma}{Lemma}{Lemmas}
\crefname{example}{Example}{Examples}
\crefname{reduction}{Reduction}{Reductions}
\crefname{algocf}{Algorithm}{Algorithms}
\newcommand{\calR}{{{\mathcal{R}}}}
\newcommand{\calS}{{{\mathcal{S}}}}
\newcommand{\np}{{\mathsf{NP}}}
\newcommand{\fpt}{{\mathsf{FPT}}}
\newcommand{\paranp}{{\mathsf{Para}\textrm{-}\mathsf{NP}}}
\newcommand{\xp}{{\mathsf{XP}}}
\newcommand{\wone}{{\mathsf{W[1]}}}
\newcommand{\p}{{\mathsf{P}}}
\newcommand{\MIPlong} {\textsc{Mixed Integer Linear Programming}\xspace}
\newcommand{\MIP} {\textsc{MIP}\xspace}
\newcommand{\eMIPlong} {\textsc{Mixed Integer Programming with Simple Piecewise Linear Transformations}\xspace}
\newcommand{\eMIP} {\textsc{MIP/SPLiT}\xspace}
\newcommand{\msmc}     {\textsc{Multiset Multicover}\xspace}
\newcommand{\wmsmc}    {\textsc{Weighted Multiset Multicover}\xspace}
\newcommand{\piece}{{{\ensuremath{\rho}}}}
\newcommand{\der}{{{\mathrm{der}}}}
\newcommand{\pieces}{{{\mathrm{pieces}}}}
\newcommand{\miss}{{{\mathrm{miss}}}}
\newcommand{\reals}{{{\mathbb{R}}}}
\newcommand{\pmax}{\ensuremath{p_{\max}}}
\newcommand{\probDef}[3]{
  \begin{quote}
   #1\\
  \textbf{Input:} #2\\
  \textbf{Question:} #3
  \end{quote}
}
\title{Mixed Integer Programming with Convex/Concave Constraints:
  Fixed-Parameter Tractability and Applications to Multicovering and
  Voting\footnote{A preliminary version of this paper appeared under
    the title ``Elections with Few Candidates: Prices, Weights, and
    Covering Problems'' in \emph{Proceedings of the 4th International
      Conference on Algorithmic Decision Theory, ADT
      2015}~\cite{BFNST15}. This journal version focuses on the
    theoretical part which is substantially revised and improved.}}
\author{
  \makebox[0.25\linewidth]{Robert Bredereck} \\ 
  TU Berlin \\
  Berlin, Germany
\and
  \makebox[0.25\linewidth]{Piotr Faliszewski} \\ 
  AGH University \\ 
  Krakow, Poland
\and 
  \makebox[0.25\linewidth]{Rolf Niedermeier} \\
  TU Berlin \\
  Berlin, Germany
\and 
  \makebox[0.25\linewidth]{Piotr Skowron} \\
  TU Berlin \\
  Berlin, Germany
\and 
  \makebox[0.25\linewidth]{Nimrod Talmon} \\
  Weizmann Institute of Science \\
  Rehovot, Israel
}
\date{}
\begin{document}

\maketitle

\setcounter{footnote}{0}

\begin{abstract}
  A classic result of Lenstra [Math.~Oper.~Res.~1983] says that an
  integer linear program can be solved in fixed-parameter tractable
  ($\fpt$) time for the parameterization by the number of variables.
  We extend this result by incorporating piecewise linear convex or
  concave functions to our (mixed) integer programs.  This general
  technique allows us to analyze the parameterized complexity of a
  number of classic $\np$-hard computational problems.  In particular, we prove
  that \textsc{Weighted Set Multicover} is in $\fpt$ when
  parameterized by the number of elements to cover, and that there
  exists an $\fpt$-time approximation scheme for \textsc{Multiset
    Multicover} for the same parameter.  Further, we use our general
  technique to prove that a number of problems from computational
  social choice (e.g., problems related to bribery and control in
  elections) are in $\fpt$ when parameterized by the number of
  candidates.  For bribery, this resolves a nearly 10-year old family
  of open problems, and for weighted electoral control of Approval
  voting, this improves some previously known $\xp$-memberships to
  $\fpt$-memberships.
\end{abstract}

\section{Introduction}

The idea of parameterized complexity theory is to measure the
difficulty of computational problems with respect to both the length
of their encodings (as in standard complexity theory) and
additional parameters (e.g., pertaining to the structure of the
input). For example, in the \textsc{Set Multicover} problem we are
given a set of elements $U = \{x_1, \ldots, x_m\}$, a multiset $\calS
= \{S_1, \ldots, S_n\}$ of sets over~$U$, integer covering
requirements $r_1, \ldots, r_m$ for the elements of $U$, and a budget
$B$.  The question is whether it is possible to pick a collection of
at most $B$ sets from $\calS$ so that each element $x_i \in U$ belongs
to at least $r_i$ of them.  This problem is a simple extension of the
\textsc{Set Cover} problem and, thus, is $\np$-hard.
There exists, however, an efficient algorithm for this problem
when the cardinality of $U$ is small. In
fact, \textsc{Set Multicover} is fixed-parameter tractable (is in $\fpt$)
with respect to the number of elements to cover:
We have an algorithm that solves it in time $f(|U|)\cdot |I|^{O(1)}$,
where $|I|$ is the length of the encoding of the given instance and
$f$ is a computable function (which depends only on the parameter;
i.e., $|U|$ in our case).

% given an instance~$I$ with parameter~$k$ can compute
% the answer to this problem in time ${f(k)\cdot |I|^{O(1)}}$, where $f$
% is a computable function which depends only on the parameter $k$ and
% $|I|$ is the length of the encoding of~$I$.

% A~parameterized problem is fixed-parameter tractable if there exists
% an algorithm that given an instance~$I$ with parameter~$k$ can compute
% the answer to this problem in time ${f(k)\cdot |I|^{O(1)}}$, where $f$
% is a computable function which depends only on the parameter $k$ and
% $|I|$ is the length of the encoding of~$I$. Note that this, indeed, is
% a much more demanding requirement than that of the class $\xp$, which
% only calls for an algorithm with running time $|I|^{f(k)}$.
% %

This $\fpt$ algorithm for \textsc{Set Multicover} proceeds by
expressing the problem as an integer linear program (ILP) and solving
it using the classic algorithm of Lenstra~\cite{Len83}.  For each
subset~$A$ of~$U$, let $\calS(A)$ be the subfamily of $\calS$ that
contains sets equal to $A$; use a variable $x_A$, intended to hold
the number of sets from $\calS(A)$ that we include in the solution;
and define a constant $b_A = |\calS(A)|$. Then, we introduce the
following constraints:
\begin{align}
\label{c1}  x_A \leq b_A  &\quad \text{for each $A \subseteq U$},\\
\label{c2}  \sum_{A \subseteq U} x_A \leq B, & \\
\label{c3}  \sum_{\mathclap{A \subseteq U\colon x_i \in A}} x_A \geq r_i & \quad \text{for each $x_i \in U$.}
\end{align}
Constraints of the form~\eqref{c1} ensure that the solution is
possible (i.e., we never use more sets of a given type than there are
in the input), constraint~\eqref{c2} ensures that we use at most $B$
sets, and constraints of the form~\eqref{c3} ensure that each element
from $U$ is covered a required number of times.  We solve this ILP
using the algorithm of Lenstra~\cite{Len83}, which decides feasibility
of ILPs in $\fpt$ time with respect to the number of integer
variables.

Unfortunately, the above approach seems to fail for the case of
\textsc{Weighted Set Multicover}, a variant of \textsc{Set Multicover}
where each set from $\calS$ also has a weight and we can only choose
sets of total weight at most $B$ (\textsc{Set Multicover} is the special
case of \textsc{Weighted Set Multicover} where all the weights are
equal to one). The reason for this apparent failure is that, for each
$A\subseteq U$, the ILP for \textsc{Set Multicover} treats all sets from
$\calS(A)$ as indistinguishable, but in \textsc{Weighted Set
  Multicover} they have weights, which give them ``identities.''
Specifically, for the case of \textsc{Weighted Set Multicover} we
would have to replace constraint~\eqref{c2} with one of the form:
\begin{align*}
 \sum_{A \subseteq U} \mathrm{weight}_A(x_A)  \leq B,
\end{align*}
where $\mathrm{weight}_A(x)$ is a (convex) function that gives the sum
of the lowest $x$ weights of the sets from~$\calS(A)$. The main
contribution of this paper is in showing a technique that allows us to
include constraints of this type, where some variables are replaced by
their ``piecewise linear convex or concave
functions'', and still solve the resulting programs in $\fpt$ time
using Lenstra's algorithm.

In \autoref{sec:ilp} we compare our technique to similar tools known in the literature. 
We argue that one of the greatest benefits of our technique is that it is
very simple to use. Indeed, to obtain an $\fpt$ algorithm for
\textsc{Weighted Set Multicover} it suffices to make only very small
changes to the algorithm for \textsc{Set Multicover} and invoke our
machinery. Second, our technique proceeds by constructing a mixed integer program
from an integer linear program with aforementioned convex constraints---this 
makes it easy to feed it into existing commercial solvers and to obtain an efficient practical algorithm with a relatively low effort.
Finally, we consider that the main contribution of this work is that we illustrate the simplicity and usefulness of the
discussed technique by applying it to a number of set-covering problems and by
resolving the computational complexity status of a number of
election-related problems parameterized by the number of candidates.
These problems include, for example, various bribery
problems~\cite{EFS09,FHH09} and priced control problems~\cite{MF16}
that were known to be in~$\xp$ for nearly ten years, but were neither
known to be fixed-parameter tractable, nor to be
$\wone$-hard. Indeed, in all these problems the voters had prices,
which gave them ``identities'' in the same way as weights gave
``identities'' to sets in the \textsc{Weighted Set Multicover}
problem.
% \footnote{ A parameterized problem is in $\xp$ if there exists an
% algorithm that, given an instance~$I$ with parameter~$k$, can
% compute the answer to this problem in time $|I|^{f(k)}$, where $f$
% is some computable function. In other words, $\xp$ is the class of
% those problems that can be solved in polynomial time under the
% assumption that the parameter is a constant.  In contrast, problems
% which are $\np$-hard even for constant values of the parameter are
% said to be $\paranp$-hard with respect to the parameter.  For
% further information, we point the readers to textbooks on
% parameterized complexity
% theory~\cite{CFK+15,DF13,FG06,Nie06}.}. %nor to be $\np$-hard.
Nonetheless, our technique is not limited to the case of election
problems with prices. For example, we show that it also applies to
% the case of
weighted voter control for Approval voting, improving results of
Faliszewski et al.~\cite{FHH13}, and Faliszewski et
al.~\cite{fal-sko-sli-tal:c:top-k-counting} apply our technique to
problems pertaining to finding winners according to several
multiwinner election rules (Peters~\cite{Pet17a} and Caragiannis~et~al.~\cite{CKMPSW16} use a very similar
technical trick).
%

% The simplicity of our technique is particularly important when we
% compare it to the very recent work of Kouteck\'y et al.~\cite{KKM17}.
% They discuss the $n$-fold ILP technique, which can solve many of the
% same kinds of problems as our approach,\footnote{Indeed, it can be
%   used to show that \textsc{Swap Bribery} is in $\fpt$ when
%   parameterized by the number of candidates, which does not seem
%   possible using our technique. However, it is not clear if their
%   method can be used to address all the problems that can be solved
%   using our approach.}  but requires its users to learn a fairly
% involved new methodology. In our case, it suffices to write simple
% ILPs in the standard format, but with the added ability to use
% piecewise-linear, convex or concave functions of particular
% variables. In particular, adapting ILPs of unweighted problems to work
% for their weighted variants (as in the case of \textsc{Set Multicover}
% and \textsc{Weighted Set Multicover}) is, in essence, mechanical.

We also demonstrate the usefulness of our technique in more
technically demanding scenarios.  In particular, we consider the
\textsc{Multiset Multicover} problem, which generalizes \textsc{Set
  Multicover} to the case of multisets, and we show that our general
technique can be used as a component in the construction of an
$\fpt$-time approximation scheme for the problem (parameterized 
by the number of elements, there is no hope for
an FPT exact algorithm as the problem is $\np$-hard even for two
elements).  In this case, our analysis combines combinatorial
arguments with the technique of handling ILPs with piecewise-linear
convex/concave transformations.

\subsection{Related Work}

Our work is related to three main research lines, regarding algorithms
for integer linear programming, regarding various types of
covering problems, and regarding algorithms for and complexity of
voting-related problems. Below we review the main points of
intersection with this literature.

\subsubsection{Integer Linear Programming}\label{sec:ilp}
Solving integer linear programs is one of the major approaches for
tackling $\np$-hard problems~\cite{Sch98}. Indeed, modern ILP solvers
such as CPLEX or Gurobi can effectively deal with ILPs with
thousands of variables and constraints; this makes them practical
tools to solve even fairly large instances of computationally
difficult problems.
Yet, from our point of view, the most important aspect of integer
linear programming is that it provides a powerful tool for designing
$\fpt$ algorithms. The first major breakthrough in this direction was
achieved by Lenstra~\cite{Len83}, who showed that \MIPlong is
fixed-parameter tractable with respect to the number~$n$ of the
variables. Then \citet{FT87ilp} and \citet{Kan87} improved the
corresponding running time bounds.

The algorithm of Lenstra is very useful, but for some problems its
parameterization by the number of integer variables seems insufficient
to obtain $\fpt$ algorithms (the \textsc{Swap Bribery}
problem~\cite{EFS09} parameterized by the number of candidates is one
particular example where this seems to be the case). Fortunately, one
may use other approaches, such as the recently popularized $n$-fold
integer programming technique ($n$-fold IP). The main idea of $n$-fold
integer programming is as follows: Our goal is to find a feasible
solution of an inequality of the form $E\cdot x \leq b$, where $E$ is
a matrix, $b$ is a vector of constants, and $x$ is a vector of integer
variables (all as in the classic ILP problem), but where $E$ is
restricted to be of a very special form ($A$ and $D$ are matrices):
\begin{equation*}
  E = \left(
      \begin{array}{ccccc}
        D & D & D & \cdots & D \\
        A & 0 & 0 & \cdots & 0 \\
        0 & A & 0 & \cdots & 0 \\
        0 & 0 & A & \cdots & 0 \\
        \vdots & \vdots & \vdots & \ddots & 0 \\
        0 & 0 & 0 & \cdots & A \\
      \end{array}
      \right).
\end{equation*}
There is an algorithm that solves ILPs of this form in $\fpt$ time
parameterized by the dimensions of the matrices $A$ and
$D$~\cite{del-hem-onn-wei:j:n-fold-ip,del-hem-kop:b:disciret-optimization-n-fold,hem-onn-rom:j:n-fold-ip},
without restricting the dimensions of matrix $E$ (i.e., the
constraints have a restricted form, but we can use as many integer
variables as we like). In spite of this restrictive structure, $n$-fold integer
programming found a number of applications, e.g., in
scheduling~\cite{kno-kou:t:n-fold-scheduling} and voting~\cite{KKM17}.
See also the works of Dvor{\'{a}}k et al.~\cite{DEGKO17} and Knop et al.~\cite{KKM17b}
for very recent generalizations and extensions of the this technique.

It is quite non-obvious how the technique of $n$-fold integer
programming compares to ours.
On the one hand, using $n$-fold IP it is
possible to find $\fpt$ algorithms for problems for which our approach
seems not to be applicable (such as \textsc{Swap Bribery} parameterized
by the number of candidates~\cite{KKM17}). On the other hand, it is
not clear if $n$-fold IP can be used in all cases where our method
works (but we also cannot provide an obvious example where it fails
and our approach works; as a side note, it seems as if proving its failure is an interesting hard task).
While we stress that our approach is easier to use than that of
$n$-fold integer programming, we also point out that Kouteck\'y et
al.~\cite{KKM17} made significant progress in making $n$-fold IP more
approachable. In particular, they described how to implement a number
of primitives, and showed how using them makes the task of building $n$-fold
programs much easier compared to directly designing the corresponding $A$ and $D$ matrices.

Further, we note that more general techniques than the one presented
in this paper are known in the literature\footnote{We thank Martin
  Kouteck\'{y} for pointing out to us the most relevant works in this
  area.}. For instance, Dadush~et~al.~\cite{DPV11} proved that an ILP
can be solved in $\fpt$ time with respect to the number of integer
variables, even if the constraints describe an arbitrary convex
polyhedron, and are given through a separation oracle.  One particular
aspect in which the result of Dadush~et~al.~\cite{DPV11} generalizes
our technique is that it allows for using convex \emph{multi-variate}
functions in the constraints formulations.  (A similar argument has
been also given by Hildebrand and K{\"{o}}ppe~\cite{HK13}---for the
case when constraints are expressed as convex polynomials, and by
Khachiyan and Porkolab~\cite{KhaPor00}---for the case when the
constraints are expressed as convex semialgebraic sets.\footnote{Some
  of these results are not commonly known among the community centered
  around parameterized complexity theory.  For instance, Khachiyan and
  Porkolab claim to show a polynomial time algorithm for solving ILPs
  provided that the number of integer variables is constant; thus,
  this result appears to show an $\xp$ membership only.  However, a
  more careful look at the paper allows to realize that Khachiyan and
  Porkolab in fact give an $\fpt$ algorithm.  Again, we thank Martin
  Kouteck\'{y} for fruitful discussions on clarifying these issues.})
Gaven\v{c}iak~et~al.~\cite{GKK17} give a comprehensive review of the
advances in solving convex integer programs from the last two decades.

Yet, we believe that our approach has two advantages that make it
preferable whenever it can be used. First, the technique explained by
our proof is simpler, uses only elementary methods, and can be
easily understood without the necessity of learning quite advanced
tools.  Second, our technique reduces ILP programs with convex
constraints to standard MILP programs, which allows to easily use
cutting edge off-the-shelf MILP solvers with the more general types of
constraints.  Thus, due to our results one may obtain both a
theoretical $\fpt$ guarantee and a practical algorithm.

% \todo[inline]{Add work on n-fold ILP. Moreover, we might doublecheck
%   with the book of Shmuel Onn "Nonlinear Discrete Optimization" and
%   perhaps other publications of this author
%   https://ie.technion.ac.il/~onn/Book/ as ebook available).}  Discuss
% \citet{KKM17}. To use: On the one hand, their technique is more
% general \todo[inline]{PS: I'm not sure if we should say that their
%   technique is more general.  Even Dusan when asked directly, was not
%   able to say if their technique is strictly speaking more
%   general. Also, as far as I remember, in their technique there was
%   some constant upper-bound on the integers used in the ILP? NT: it is
%   indeed not more general, simply different (specifically, indeed the
%   numbers in the ILP shall be small; but the number of variables can
%   be very large..)}  and provides a better theoretical running time
% bound.  On the other hand, it requires solving a very general
% non-trivial dynamic program whereas our framework requires solving
% mixed integer programs.  In practice, there is reasonable hope that
% applying high-end commercial ILP solvers these integer programs can be
% solved efficiently for many real-world instances.

\subsubsection{Covering}
The class of covering problems is of fundamental importance in
theoretical computer science because, on the one hand, covering
problems are abstractions of many real-life issues, and, on the other
hand, they exhibit very interesting algorithmic behavior.  For
example, the \textsc{Set Cover} problem, arguably the best known
representative of the class, was among the first problems shown to be
$\np$-complete (in Karp's seminal paper~\cite{Kar72}), and later it was
thoroughly studied from the points of view of
(in)approximability~\cite{WS11} and
parameterized complexity~\cite{DF13}. For example, it is known to be
W[2]-hard for the parameterization by the solution size (indeed, it is
one of the classic W[2]-hard problems), but it is fixed-parameter
tractable with respect to the number of elements.  Covering problems
have various applications in domains such as software engineering
(e.g., covering scenarios by few test cases), antivirus development
(looking for a set of suspicious byte strings which covers all known
viruses), databases (finding a set of labels which covers all data
items), to name just a few.

There is a vast literature on the \textsc{Set Cover} problem and,
thus, we only briefly point out selected results.  It is known that a
simple greedy algorithm gives a $\log(m)$ approximation guarantee,
where $m$ is the size of the set to be covered (this algorithm was
given, e.g., by Johnson~\cite{Joh74}, but see the textbook of
Vazirani~\cite{Vaz01} for further references). It is also known that
unless $\p = \np$, no polynomial-time algorithm can approximate the
problem with a better ratio~\cite{Fei98,din-ste:c:set-cover}.  The
variant of the problem where each element appears in at most $f$ sets
can be approximated with the ratio~$f$~\cite{Vaz01}.  
Parameterized approximation algorithms for the problem were considered
by Bonnet et~al.~\cite{journals/corr/BonnetPS13}, Skowron and
Faliszewski~\cite{sko-fal:c:maxcover} and Skowron~\cite{Skow16a}.

\textsc{Set Multicover} and \textsc{Multiset Multicover} also received
some attention in the literature. For example, Berman et
al.~\cite{ber-das-son:j:set-multicover} considered approximability of
the former problem and Rajagopalan and
Vazirani~\cite{raj-vaz:j:multicover} studied the same issue for the
latter (and, in general, for various covering problems).  Exact
algorithms for these problems were studied by Hua
et~al.~\cite{conf/isaac/HuaYLW09}. Approximation algorithms for
covering integer programs were considered by
Kolliopoulos~\cite{Kolliopoulos03Covering} and Kolliopoulos and
Young~\cite{Kolliopoulos05Covering}.

\subsubsection{Voting}

Computational social choice (COMSOC) is an interdisciplinary area that
spans computer science, economics, and operations research, and whose
goal is to (computationally) analyze group decision-making
processes~\cite{Rot16,BCELP16}. From our point of view, the most
relevant part of COMSOC regards the complexity of various
election-related problems. We mostly focus on the problems of
manipulating election outcomes through bribery and
control~\cite{fal-rot:b:control-bribery}.

% We refer to the books of \citet{Rot16} and \citet{BCELP16} for a
% general account on voting problems.  Throughout the last two
% decades, numerous authors have been studying algorithmic aspects of
% elections, focusing, for example, on the problems such as the
% complexity of election winner
% determination~\cite{BTT89,HHR97,BGN10,CCFetal12}, the complexity of
% running political campaigns~\cite{EFS09,EF10,DS12,BFLR12,BCFNN14},
% or---perhaps most frequently---the complexity of manipulating
% elections~\cite{BTT89,BTT92,CSL07,HHR07,ZPR08,FHH13} (see recent
% surveys for more references~\cite{BCE14,FHH10,FP10}; here we only
% provide some example references). Researchers have identified many
% important settings in which these results apply. For example, for
% political elections---such as the presidential ones---one should
% consider elections with very few candidates, but possibly very many
% voters. For meta-search engines~\cite{DKNS01} (or, for small teams
% of agents focused on planning tasks~\cite{BHKS13,CFNN15}), one
% should consider an opposite situation, elections with few voters but
% multiple candidates. For yet other settings (for example, those
% appearing in voting over combinatorial
% domains~\cite{BBDHP04,NMFK12Z}) both the number of candidates and
% the number of voters may be large. In this paper, we focus on the
% first setting, modeling political elections, with few candidates and
% (possibly) many voters.

In the simplest variant of the bribery problem~\cite{FHH09}, we are
given a description of an election (i.e., a set of candidates and a
collection of voters, with their votes represented in some way) and an
integer $k$. We ask if it is possible to ensure that a given candidate
becomes a winner by bribing at most some $k$ voters (i.e., by changing
the votes of at most $k$ voters). However, there are many other
variants where, for example, each voter may have a different price for
being bribed~\cite{FHH09}, the prices may depend on the extent to
which we change given votes~\cite{EFS09}, or where the votes are
represented using some involved language~\cite{NMFK12Z}. Initially,
bribery problems were supposed to model undesirable behavior, but
later researchers realized that they also capture perfectly legal
actions, such as campaigning~\cite{BFLR12,BCFNN16,DS12,SFE17}, fraud
detection~\cite{Xia12}, or analysis of candidate
performance~\cite{fal-sko-tal:c:bribery-measure-success}.  Control
problems are similar in spirit to the bribery ones, but instead of
modifying votes, we are allowed to add/delete either candidates or
voters~\cite{BTT92,HHR07}.

% Algorithmic problems that model  of elections include,
% among others, strategic voting problems %~\cite{BTT89b,CSL07}
% (where we are given an election with honest voters and we ask whether
% a group of manipulators can cast votes to ensure their preferred
% candidate's victory), election control problems %~\cite{BTT92,HHR07}
% (where we are given an election and ask if we can ensure a given
% candidate's victory by adding/deleting candidates/voters), or
% bribery %~\cite{EFS09,FHH09,Xia12}
% and campaign management problems %~\cite{BFLR12,BCFNN16,DS12,SFE17}
% (where we want to ensure a given candidate's victory by changing some
% of the votes, but where each vote change comes at a price and we are
% bound by a budget).
% %
% % \footnote{We point the reader to recent surveys for more
% % details on computational social choice and on these
% % problems~\cite{BCE14,FP10,FHH10}.}
% %
We focus on the case where we have a few candidates but (possibly)
many voters.  %As pointed out by
%Conitzer et al.~\cite{CSL07}, 
This is a very natural setting and it models many real-life scenarios
such as political elections %(for example,
%elections for the mayor of a city etc.) 
or elections among company stockholders.  The complexity of
manipulating elections with few candidates is, by now, very well
understood. On the one hand, if the elections are weighted (as is the
case for the elections held by company stockholders), then our
problems are typically $\np$-hard even if the number of candidates is
a small fixed constant~\cite{CSL07,FHH09,FHH13}; these results
typically follow by reductions from the well-known $\np$-hard
\textsc{Partition} problem.  One particular example where we did not
have $\np$-hardness for the setting with the fixed number of candidates
was control by adding/deleting
voters under the Approval and $k$-Approval voting rules. For this problem
Faliszewski at al.~\cite{FHH13} have shown $\xp$ membership, but could
neither prove W[1]-hardness nor give an $\fpt$-algorithm; in this
paper we resolve this problem by proving fixed-parameter tractability.

If we consider parameterization by the number of candidates but the
elections are unweighted (as is the case for political elections) and
no prices are involved, then we typically get $\fpt$ results.  These
results are often obtained by expressing the respective problems as
integer linear programs (ILPs) and then applying Lenstra's
algorithm~\cite{Len83}.
In essence, these results are of the same nature as the $\fpt$
algorithm for the \textsc{Set Multicover} problem given at the
beginning of the introduction.
The main missing piece in our understanding of the complexity of
manipulating elections with few candidates regards those
unweighted-election problems where each voter has some sort of price
(for example, as in the bribery problems). In this paper we almost
completely fill this gap by showing a general approach for proving
$\fpt$ membership for a class of bribery-like problems parameterized
by the number of candidates, for unweighted elections.

\subsection{Organization}

The paper is organized as follows. First, in
\autoref{sec:general_technique}, we describe our technique of
handling ILPs with piecewise-linear convex/concave transformations.
Then, in \autoref{sec:showcase}, we show examples of applying it
to covering problems and to problems regarding Approval voting (which
are very close in spirit to covering problems). In
\autoref{sec:generalizations} we show how further election-related
problems, beyond Approval voting, can be solved using our approach, and
we conclude in \autoref{sec:conclusions}.

\section{Mixed Integer Linear Programming with Piecewise Linear Convex/Concave Functions}\label{sec:general_technique}

% To show fixed-parameter tractability, integer linear programming has
% become a powerful tool.  This is due to a famous result by
% \citet{Len83}, which was later improved by \citet{FT87ilp} and
% \citet{Kan87}.  Often, the key achievement of the mentioned results is
% read as ``integer linear programming is fixed-parameter tractable with
% respect to the number of variables''.  Formally they considered the
% following decision problem.

We now describe our technique of solving
integer linear programs with piecewise linear convex/concave
transformations. The technique is based on using rational-valued
variables to simulate the behavior of convex/concave transformations
and it uses as a subroutine an algorithm for solving the \MIPlong problem, as defined below.

\probDef{\MIPlong (\MIP)} 
  {An $m \times (n+r)$ matrix~$A$ with integer elements and a length-$m$
  integer vector~$b$ of rational numbers.}
{Is there a length~$(n+r)$ vector~$x=(x_1,\ldots ,x_{n+r})$ such that
  $A \cdot x \le b$, where values $x_1, \ldots, x_n$ are required to
  be integers, but values $x_{n+1}, \ldots, x_{n+r}$ can be rational?}

\noindent
We use standard terminology and syntax from linear programming.  In
particular, we interpret the entries of vector~$x$ as \emph{variables}
and the rows of matrix~$A$ as \emph{constraints}. Formally, we require
matrix~$A$ and vector~$b$ to have integer entries, but it would be
straightforward to allow them to have rational values (indeed, we
implicitly assume that whenever we obtain matrix $A$ or vector $b$
with rational values, they are transformed to the integer form by
appropriate multiplication).

Formally, the \MIP problem captures the issue of testing whether
\emph{some} feasible solution exists for a given mixed integer
program.  If we want a solution that maximizes a certain objective
function of the form $c_1x_1 + c_2x_2 + \cdots + c_{n+r}x_{n+r}$ (where $c_1,
\ldots, c_{n+r}$ are integers), then we can use the standard trick of
including the constraint:
\[
   c_1x_1 + c_2x_2 + \cdots + c_{n+r}x_{n+r} \geq T
\]
in the program and performing a binary search for the largest value of
$T$ for which a feasible solution exists. If $M$ is the value of the
objective function for the optimal solution, then this requires solving
$\log M$ programs instead of one. Typically this is a perfectly
acceptable price to pay.

The following result, due to \citet{FT87ilp} and \citet{Kan87}, gives
the $\fpt$ algorithm for the \MIP problem parameterized by the number
of integer variables (the result is built on top of Lenstra's original
algorithm~\cite{Len83}).

\begin{theorem}[Lenstra~\cite{Len83}, Frank and Tardos~\cite{FT87ilp},
  and Kannan~\cite{Kan87}]\label{thm:mip}
  There is an algorithm that solves \MIPlong in $O(n^{2.5n
    +o(n)}\cdot |I|)$ time, where $|I|$~is the number
  of bits encoding the input and $n$ is the number of integer variables.
\end{theorem}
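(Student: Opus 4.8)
The statement is a classical result, so the plan is to recall the structure of Lenstra's algorithm and to indicate where the refinements of Frank--Tardos and Kannan enter. At a high level, the plan is to reduce the mixed problem to pure integer feasibility in $n$ dimensions and then to solve the latter by recursively branching on ``thin'' lattice directions.

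First I would eliminate the continuous variables. Writing the system as $A \cdot x \le b$ with integer variables $x_1,\dots,x_n$ and rational variables $x_{n+1},\dots,x_{n+r}$, the set of integer-part assignments that extend to a full feasible point is exactly the projection of the polyhedron $\{x : A\cdot x \le b\}$ onto the first $n$ coordinates. By linear-programming duality (equivalently, by Fourier--Motzkin elimination) this projection is again a rational polyhedron $Q \subseteq \reals^n$ whose defining inequalities have coefficients bounded in terms of the original data. Hence \MIP feasibility is equivalent to deciding whether $Q$ contains a point of $\integers^n$, a pure integer feasibility question in which the $r$ rational variables no longer appear in the parameter.

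Next I would decide integer feasibility of a rational polyhedron $P \subseteq \reals^n$ (which, after a standard a priori bound on the coordinates of a minimal solution, may be taken to be a polytope) by the ``flatness'' recursion. The two ingredients are: (i) a rounding step, in which the ellipsoid method produces an invertible affine map making $P$ well-conditioned, i.e.\ sandwiched between two concentric balls whose radii differ by a factor depending only on $n$; and (ii) lattice basis reduction (the LLL algorithm, in Kannan's sharpened form) applied to the image of $\integers^n$ under this map. A reduced basis lets me test the following dichotomy, guaranteed by Khinchin's flatness theorem: either the inradius of the rounded body exceeds the covering radius of the lattice---so $P$ provably contains an integer point, which I can then exhibit---or $P$ has lattice width at most a function $f(n)$ of the dimension alone in some dual direction $d$ that the reduced basis supplies. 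In the latter case every integer point of $P$ lies on one of at most $f(n)+1$ parallel lattice hyperplanes $d^{\top} x = k$; for each admissible value of $k$ I substitute and recurse on an integer feasibility problem of dimension $n-1$.

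The recursion has depth $n$ and branching factor at most $f(n)$, so the search tree has a number of leaves bounded by a function of $n$ alone, and the work at each node---rounding and basis reduction---is polynomial in the bit length; this already yields an $\fpt$ running time of the form $g(n)\cdot |I|^{O(1)}$. The main obstacle, and the technical heart of the theorem, is precisely the flatness dichotomy: one must prove that a lattice-point-free convex body has lattice width bounded by a function of $n$ only, and---crucially for the algorithm---make this constructive by extracting the thin direction $d$ from a reduced basis. Driving the branching constant down so that the product over the recursion is $n^{2.5n+o(n)}$ rather than a crude $2^{O(n^{3})}$ is exactly the content of Kannan's improved reduction, while ensuring that $|I|$ enters only linearly (rather than polynomially) is achieved by the preprocessing of Frank and Tardos, which replaces the input coefficients by equivalent ones whose bit length is bounded by a function of $n$. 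Neither refinement changes the high-level scheme; both amount to a careful quantitative analysis of the lattice-reduction subroutine.
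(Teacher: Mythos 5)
The paper offers no proof of this statement: \autoref{thm:mip} is imported verbatim from the cited works of Lenstra, Frank and Tardos, and Kannan, and is used as a black box, so there is no in-paper argument to compare yours against. Your sketch is a faithful outline of how those cited works establish the result: eliminate the continuous variables by projecting onto the integer coordinates, decide integer feasibility of the resulting polyhedron by Lenstra's rounding-plus-flatness recursion (ellipsoid rounding, reduced lattice basis, branching on at most $f(n)$ parallel hyperplanes in a thin direction), with Kannan's sharper basis reduction accounting for the $n^{2.5n+o(n)}$ factor and Frank--Tardos coefficient preprocessing accounting for the merely linear dependence on $|I|$. One point would need care in a full write-up: you cannot afford to construct the projection $Q$ \emph{explicitly} by Fourier--Motzkin elimination, since eliminating $r$ continuous variables can increase the number of inequalities exponentially in $r$, and $r$ is not a parameter here; the projection must be handled implicitly, e.g., via a separation oracle implemented by linear programming over the original polyhedron, which is all that the ellipsoid-based rounding step actually needs. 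With that reading, your outline is correct and coincides with the proofs in the cited sources.
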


% To extend this fixed-parameter tractability result to integer linear
% programming formulations, one has to take care of the additional
% objective function on~$x$.  In most cases, we can simply assume some
% upper or lower bound for the objective value (which is often
% explicitly given in the decision variant of the problem) and replace
% the objective function by an additional constraint.  If one only has
% to consider a bounded number~$k$ of possible values for the objective
% function, then one can even simulate the minimization or maximization
% process by decreasing or increasing the bound in a binary-search
% manner.  This gives an additional factor~$\log k$ on the running time
% bound.

Below we describe how the \MIP problem and the above theorem can be
extended to the case where each variable is replaced by some piecewise
linear convex/concave transformation, while still maintaining the
$\fpt$ running time.

% In this section, we describe a new general technique which allows to
% design $\fpt$ algorithms for a wide class of optimization problems.
% In the most general framework, our problems can be viewed as
% relaxations of integer linear programs.  Technically, our result gives
% a convenient way of using Lenstra's famous algorithm---we show that an
% ILP can be relaxed by admitting convex or concave, piecewise linear
% transformations of integer variables, and that such relaxed programs
% can still be solved in $\fpt$ time.  This general result will be used
% throughout the paper to derive a number of results for more specific
% types of problems.

\paragraph{Piecewise Linear Convex/Concave Transformations.}

We consider two simple types of piecewise linear transformations,
\emph{piecewise linear convex functions} and \emph{piecewise linear
  concave functions}.  A piecewise linear convex function is a
continuous convex function, defined on a finite sequence of intervals
(that together give the set of all real numbers) so that for each of
the intervals, the function restricted to this interval is linear.
Piecewise linear concave functions are defined analogously, except
that they are concave instead of convex. We present examples of functions
of this type in \autoref{fig:piecewise_linear}.

For a piecewise linear convex/concave function $f: \reals \to \reals$,
we denote the decomposition of its domain into a minimal number of
disjoint intervals on which it is linear as follows:
\begin{align*}
\reals =\; (-\infty, \piece(f, 1)] \cup 
           (\piece(f, 1), \piece(f, 2)] \cup
           (\piece(f, 2), \piece(f, 3)] \cup \ldots \cup
           (\piece(f, \ell), \infty) \text{.}
\end{align*}
We say that such a function $f$ consists of $\ell+1$ pieces and we
define $\pieces(f)$ to be $\ell+1$.  We refer to the function~$f$
restricted to interval~$(-\infty, \piece(f, 1)]$ as the zeroth piece
of~$f$, to the function~$f$ restricted to interval $[\piece(f, 1),
\piece(f, 2)]$ as to the first piece of $f$, to the function~$f$
restricted to interval $[\piece(f, 2), \piece(f, 3)]$ as to the second
piece of $f$, and so on.  By $\der(f, i)$ we denote the (constant)
derivative of the $i$-th piece of $f$.

For technical reasons we also require that for each piecewise linear
convex/concave function we have that
(a) $f(0)$ is integer,
(b) for each piece $i$, $0 \leq i \leq \pieces(f)$, the derivative $\der(f,i)$ is integer, and
(c) for each $i$, $1 \leq i \leq \pieces(f)-1$, $\piece(f,i)$ is an integer.
These requirements are technical only and, for example, we could
replace all the occurrences of the word ``integer'' with ``rational
value'' and all our results would still hold (this is analogous to the
fact that we could define the \MIP problem to work with rational
values in the matrix $A$ and vector $b$).

\begin{figure}[t!]
  \begin{center}
    \includegraphics[scale=0.48]{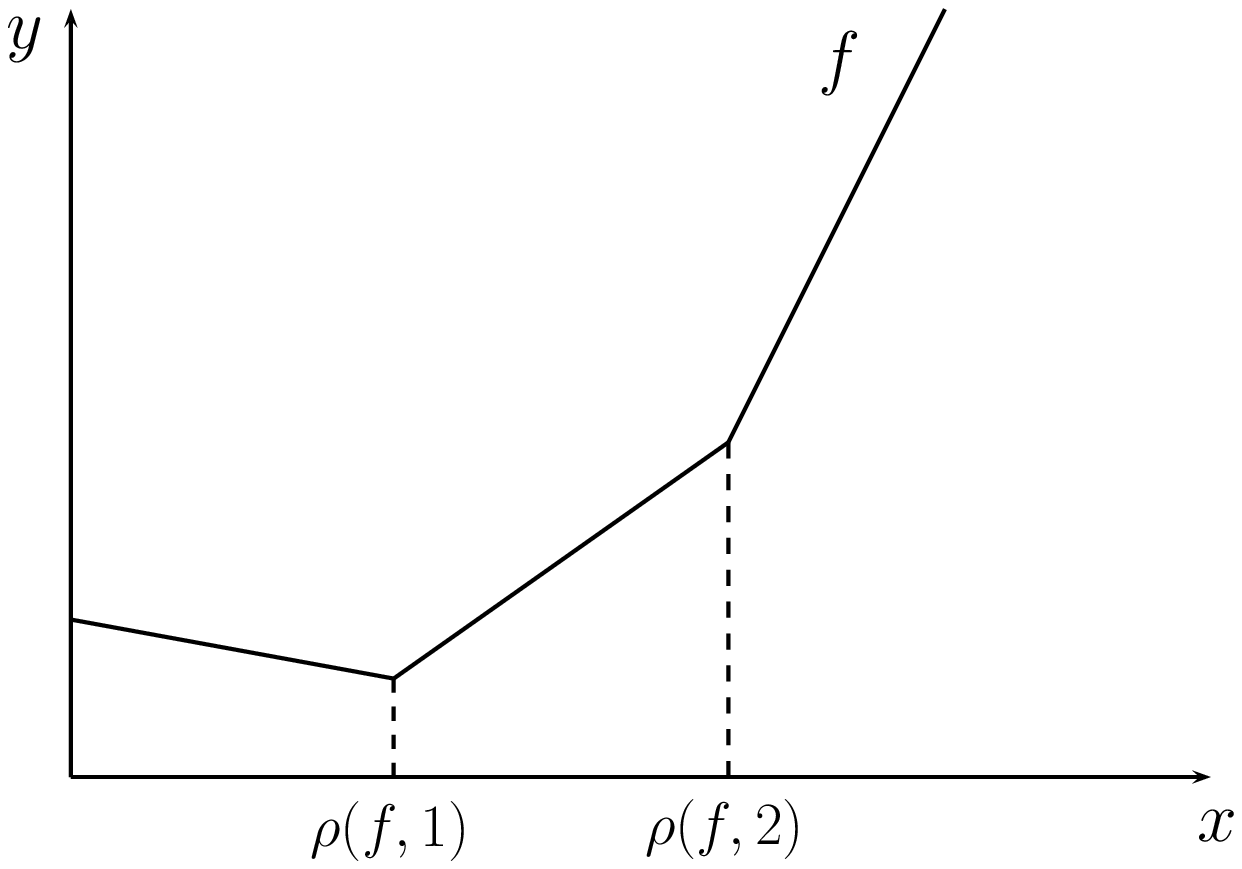} 
    \includegraphics[scale=0.48]{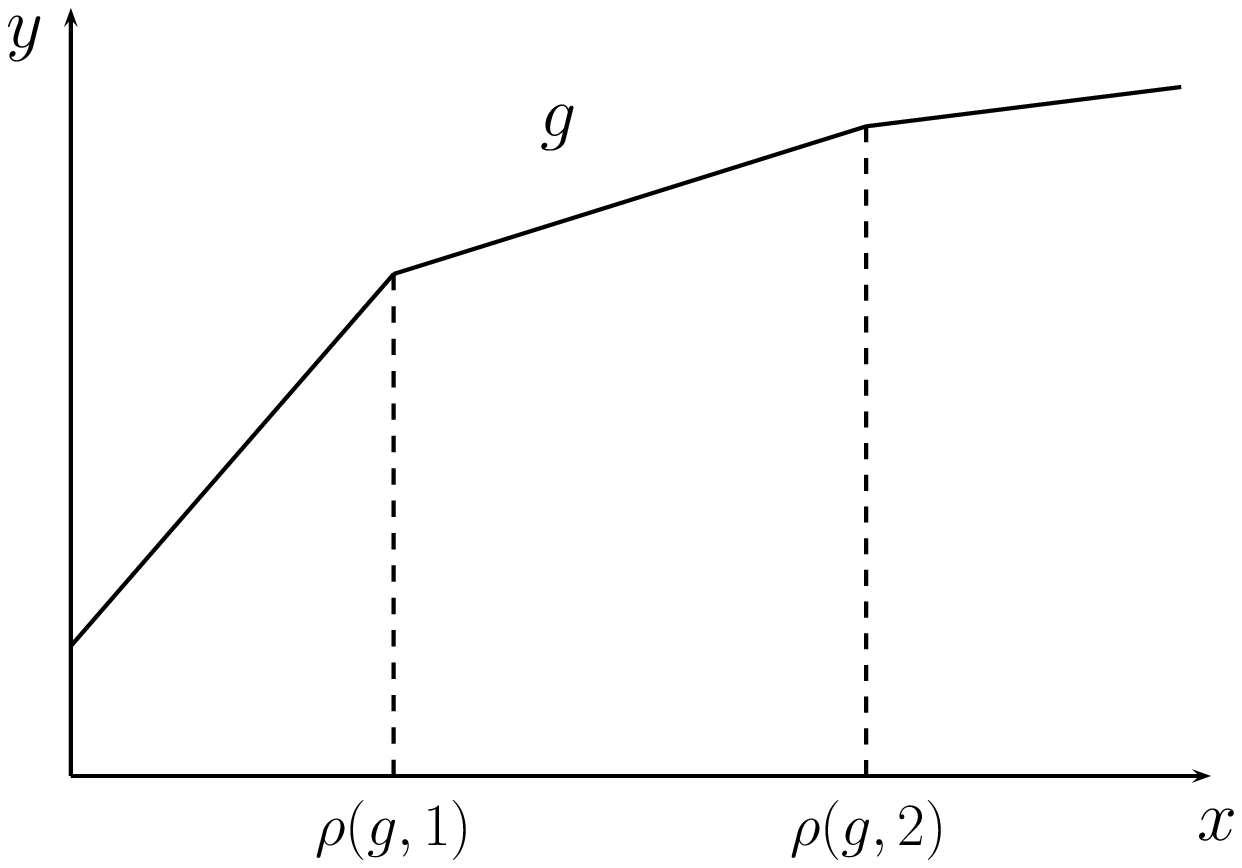}
  \end{center}
  \vspace{-0.3cm}
  \caption{An example of a piecewise linear convex function $f$ (left plot) and a piecewise linear concave function $g$ (right plot) with three pieces.}
  \label{fig:piecewise_linear}
\end{figure}

\paragraph{Mixed Integer Programs with Simple Linear Convex/Concave Transformations.}
%Fixed-Parameter Tractability of Relaxed Integer Programs}

The following problem captures our extension of the \MIPlong problem and
will be our central technical tool in the following sections.

\newcommand{\rn}{\ensuremath{(n+r)}}
\newcommand{\mk}{\ensuremath{(m+k)}}
\probDef{\eMIPlong (\eMIP)}
{A collection of $(n+r)m$ piecewise linear convex functions $F = \{f_{i, j} \colon 1 \le i \le \rn, 1 \le j \le m\}$,
 a collection of $(n+r)m$ piecewise linear concave functions $G = \{g_{i, j} \colon 1 \le i \le \rn, 1 \le j \le m\}$,
 and a vector $b \in \mathds{Z}^{m}$.}
{Is there a vector~$x$ such that
\begin{alignat*}{2}
 \sum_{i=1}^{n+r} f_{i, 1}(x_i) &\leq  \sum_{i=1}^{n+r} g_{i, 1}(x_i) &&+ b_1, \\
 \sum_{i=1}^{n+r} f_{i, 2}(x_i) &\leq  \sum_{i=1}^{n+r} g_{i, 2}(x_i) &&+ b_2, \\
 &~~\vdots && \\
 \sum_{i=1}^{n+r} f_{i, m}(x_i) &\leq  \sum_{i=1}^{n+r} g_{i, m}(x_i) &&+ b_{m},\\
% x_i &\leq 0            && \text{for }1 \le i \le n+r, \\
 x_i &\in \mathds{N}    && \text{for }1 \le i \le n, \\
 x_i &\in \mathds{R^+}  && \text{for }n+1 \le i \le n+r?
\end{alignat*}
}

% Let us fix two natural numbers $n, m \in \naturals$.
% Consider $n$ (integer or rational) variables $x_1, \ldots, x_n$, a collection of
% $nm$ piecewise linear convex functions $F = \{f_{i, j}\colon i \in [n], j \in [m]\}$,
% and the following optimization problem:
% 
% \noindent\eMIPlong~(\eMIP)
% \begin{equation*}
% \begin{aligned}
% & \mathrm{minimize:} & & \sum_{i=1}^n c_i x_i \\
% & \text{subject to:} & & \sum_{i=1}^n a_{i, 1} x_i \geq  \sum_{i=1}^n f_{i, 1}(x_i) + b_1, \\
% &&& \sum_{i=1}^n a_{i, 2} x_i \geq  \sum_{i=1}^n f_{i, 2}(x_i) + b_2, \\
% &&& \ldots \\
% &&& \sum_{i=1}^n a_{i, m} x_i \geq  \sum_{i=1}^n f_{i, m}(x_i) + b_m, \\
% &\text{and:} && x_i \in \mathbb{Z}, \quad \text{for each} \; i \in [n], \\
% &&& x_i \geq 0, \quad \text{for each} \; i \in [n],
% \end{aligned}
% \label{eq:generalOptimizationProblem}
% \end{equation*}
% where the coefficients from the sets $\{a_{i,j}\colon i \in [n], j \in [m]\}$ and $\{c_i\colon i \in [n]\}$ are all nonnegative,
% and we do not put any restrictions on the coefficients from the sets $\{b_{j}\colon j \in [m]\}$.
% Let $p_{i,j}$ denote the number of linear pieces of function $f_{i,j}$, and let $p = \sum_{i, j} p_{i, j}$.
% Below, we show that if the total number of pieces $p$ can be upper-bounded by a polynomial of the size of the input,
% then the above optimization problem can be solved in $\fpt$ time for the parameter being the number of integer variables.

\noindent Below we describe how to use \autoref{thm:mip} to solve \eMIP with 
up to polynomial factors the 
same asymptotic time complexity as \MIP.
% Our theorem generalizes Lenstra's result for mixed integer linear programming~\cite[Section~5]{Len83}.   

\begin{theorem}\label{thm:general_theorem}
 There is an algorithm that solves
 \eMIPlong  in $O(n^{2.5n +o(n)}\cdot (|I|+\pmax)^{O(1)})$ time, where
 $n$~is the number of integer variables,
 $\pmax$~is the maximum number of pieces per function,
 and $|I|$~is the number of bits encoding the input.
\end{theorem}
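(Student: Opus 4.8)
The plan is to reduce \eMIP to ordinary \MIP by linearizing each convex and concave term, paying only a polynomial blow-up in the number of \emph{rational} variables and constraints while leaving the number of integer variables untouched; \autoref{thm:mip} then delivers the claimed running time. Concretely, for every pair $(i,j)$ I would introduce a fresh rational variable $y^{+}_{i,j}$ meant to capture $f_{i,j}(x_i)$ from above and a fresh rational variable $y^{-}_{i,j}$ meant to capture $g_{i,j}(x_i)$ from below, and replace the $j$-th constraint by the genuinely linear inequality $\sum_i y^{+}_{i,j} \le \sum_i y^{-}_{i,j} + b_j$.

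The geometric fact that makes this work is that a piecewise linear \emph{convex} function equals the pointwise maximum of the affine functions obtained by extending each of its pieces to all of $\reals$, and dually a piecewise linear \emph{concave} function equals the pointwise minimum of such affine extensions. Writing $\ell^{f}_{i,j,k}(x) = \der(f_{i,j},k)\,x + c^{f}_{i,j,k}$ for the affine extension of the $k$-th piece of $f_{i,j}$, the epigraph requirement $y^{+}_{i,j} \ge f_{i,j}(x_i)$ is equivalent to the family of linear constraints $y^{+}_{i,j} \ge \ell^{f}_{i,j,k}(x_i)$ ranging over all pieces $k$; symmetrically, $y^{-}_{i,j} \le g_{i,j}(x_i)$ becomes $y^{-}_{i,j} \le \ell^{g}_{i,j,k}(x_i)$ over all pieces. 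I would add all of these to the program. The intercepts are integers: starting from the integer value $f_{i,j}(0)$ and accumulating the integer derivatives $\der(f_{i,j},\cdot)$ across the integer breakpoints $\piece(f_{i,j},\cdot)$ shows that $f_{i,j}$ takes integer values at every breakpoint, so each affine extension has integer coefficients, and likewise for $g_{i,j}$. Hence every entry of the resulting matrix is an integer of magnitude polynomial in $|I|+\pmax$.

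Feasibility of the two programs then coincides. If the original constraints hold for some $x$, set $y^{+}_{i,j}=f_{i,j}(x_i)$ and $y^{-}_{i,j}=g_{i,j}(x_i)$; these satisfy the epigraph/hypograph constraints with equality and collapse the $j$-th linear inequality exactly to the original one. Conversely, any feasible $(x,y^{+},y^{-})$ obeys $\sum_i f_{i,j}(x_i) \le \sum_i y^{+}_{i,j} \le \sum_i y^{-}_{i,j} + b_j \le \sum_i g_{i,j}(x_i) + b_j$, so $x$ is feasible for \eMIP. The integer variables $x_1,\ldots,x_n$ are carried over verbatim, so integrality is preserved and their count remains $n$; the only new variables are the $2(n+r)m$ rational variables $y^{\pm}_{i,j}$, and the number of new constraints is $O((n+r)m\,\pmax)$. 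Feeding this \MIP instance into \autoref{thm:mip} gives an algorithm running in $O(n^{2.5n+o(n)}\cdot(|I|+\pmax)^{O(1)})$ time.

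The conceptual content here is small; the step demanding care is matching the direction of convexity to the direction of the inequality so that the relaxation is tight precisely at the feasible points. It is essential that the convex terms sit on the side bounded above, so that the upper relaxation $y^{+}\ge f$ is harmless, and that the concave terms sit on the side bounded below, so that $y^{-}\le g$ is harmless; had the roles been interchanged, the same epigraph trick would \emph{enlarge} the feasible region rather than preserve it. The remaining obstacle is purely the bookkeeping needed to certify that all introduced coefficients are integers of polynomial size, which is exactly what conditions (a)--(c) on the transformations guarantee.
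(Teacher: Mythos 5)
Your proof is correct, and it takes a recognizably different route from the paper's. Both arguments share the same overall architecture---keep the integer variables untouched, introduce fresh rational variables (the paper's $w_{i,j},u_{i,j}$, your $y^{+}_{i,j},y^{-}_{i,j}$) that over-/under-estimate $f_{i,j}(x_i)$ and $g_{i,j}(x_i)$, replace constraint $j$ by its linear surrogate, and invoke \autoref{thm:mip}---but the linearization gadget differs. The paper encodes each function through \emph{per-piece} slack variables $z_{i,j,\ell}$ (resp.\ $y_{i,j,\ell}$) intended to equal $\max(0,x_i-\piece(f_{i,j},\ell))$, and expresses the function value via the telescoping slope-difference identity (constraints \eqref{eq:constraint_y}--\eqref{eq:constraint_z_concave}); its correctness argument must then show that any feasible solution can be massaged so that these variables, and hence $w_{i,j}$ and $u_{i,j}$, attain exactly the intended values $f_{i,j}(x_i)$ and $g_{i,j}(x_i)$. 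You instead use the representation of a piecewise linear convex function as the pointwise maximum of the affine extensions of its pieces (dually, the minimum for concave ones), so that $y^{+}_{i,j}\ge f_{i,j}(x_i)$ becomes one linear inequality per piece and no per-piece variables are needed at all. What your route buys is economy and a shorter correctness proof: only $2(n+r)m$ new rational variables rather than roughly one per piece, and feasibility equivalence follows from the sandwich $\sum_i f_{i,j}(x_i)\le\sum_i y^{+}_{i,j}\le\sum_i y^{-}_{i,j}+b_j\le\sum_i g_{i,j}(x_i)+b_j$, with no argument about pushing variables down to equality. Both proofs hinge on the same point you flag explicitly---convex terms must sit on the $\le$ side so that over-estimation is harmless. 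One small imprecision: the intercepts $c^{f}_{i,j,k}=f_{i,j}(\piece(f_{i,j},k))-\der(f_{i,j},k)\,\piece(f_{i,j},k)$ are indeed integers by conditions (a)--(c), but their \emph{magnitude} need not be polynomial in $|I|+\pmax$ (input numbers can already be exponentially large in their bit-length); what you actually need, and what holds, is that their \emph{encoding length} is polynomial in $|I|+\pmax$, so the bound from \autoref{thm:mip} is unaffected.
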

% \begin{theorem}\label{thm:general_theorem}
% Let $p$~be upper-bounded by~$f(n) \cdot |I|^{O(1)}$.
% \todo{Work out more precisely. Every theorem should be as self-contained as possible.}
% Then, the optimization problem described by set of inequalities~\eqref{eq:generalOptimizationProblem} can be solved in time $\fpt(n)$.
% \end{theorem}

\begin{proof}
  To prove the theorem, we will reduce \eMIP to \MIP, by replacing
  each non-linear constraint with a polynomial number of linear ones,
  using polynomially many additional real-valued variables (but
  without changing the number of integer ones).  This will allow us to
  invoke \autoref{thm:mip} to solve the resulting program.

  Note that in the \eMIP problem (as in the \MIP one), we consider the
  canonical form, where all variables are nonnegative.  Hence, we can
  assume that the zeroth piece of each function~$f_{i,j}$ and each
  function~$g_{i,j}$ includes point~$0$ (pieces covering only negative
  points are irrelevant).  Furthermore, by appropriately setting the
  $b_j$~coefficients, we can also assume that for each $i$ and $j$ we
  have $f_{i,j}(0)=0$ and $g_{i,j}(0)=0$.

  Our reduction starts with the input \eMIP instance and successively
  transforms it into an ordinary \MIP instance.  We keep all integer
  variables $x_1, \ldots, x_n$ and all real-valued variables $x_{n+1},
  \ldots, x_{n+r}$ of the original \eMIP instance, but we introduce
  additional real-valued variables and we replace non-linear
  constraints with a number of linear ones.

  \paragraph{Replacing a Non-Linear Constraint.}
  Let $j$ be an integer such that the non-linear constraint
  \begin{align}\label{eq:really_original}
      \sum_{i=1}^{n+r} f_{i, j}(x_i) \leq  \sum_{i=1}^{n+r} g_{i, j}(x_i) + b_{j}
  \end{align}
  has not yet been replaced with linear ones. We remove it from the
  program and in its place we include constraint (where $w_{i,j}$ and
  $u_{i,j}$ are new real-valued variables):
  \begin{align}\label{eq:original}
    \sum_{i=1}^{n+r} w_{i,j} &\leq \sum_{i=1}^{n+r} u_{i, j} + b_{j} \text{.}
  \end{align}
  We will
  include additional constraints so that variables $w_{i,j}$ will
  upper-bound values $f_{i,j}(x_i)$ and variables $u_{i,j}$ will
  lower-bound values $g_{i,j}(x_i)$ (and, indeed, we will be able to
  assume that these variables have exactly the values $f_{i,j}(x_i)$
  and $g_{i,j}(x_i)$, respectively).

  \paragraph{Additional Variables.}

  To ensure that variables $w_{i,j}$ and $u_{i,j}$ have correct values
  (in the feasible solution for our program), we need additional
  variables. For each $i \in [n+r]$, we introduce $\pieces(g_{i,j})$
  real-valued variables $y_{i, j, 1}, \ldots, y_{i,j,
    \pieces(g_{i,j})}$ and $\pieces(f_{i,j})$ real-valued variables
  $z_{i, j, 1}, \ldots, z_{i,j, \pieces(f_{i,j})}$.  We will ensure
  that if there is a feasible solution to our program then there is
  one where $y_{i,j,\ell} = \max(0, x_i - \piece(g_{i,j}, \ell))$ and
  $z_{i,j,\ell} = \max(0, x_i - \piece(f_{i,j}, \ell))$.
  In other words, for each $\ell$, the variable $y_{i,j,\ell}$
  (resp. $z_{i,j,\ell}$) measures how far the variable $x_i$ is beyond
  the beginning of the $\ell$-th piece of function $g_{i,j}$ (resp. of
  function $f_{i,j}$).

%    \subsubsection*{Replacing Nonlinear Constraints}

%    We first describe how to replace some constraint that uses piecewise
%    linear convex (resp.\ concave) transformations by at most
%    $O(\pmax\cdot(n+r))$ additional constraints and variables.
% % %The necessary variables are the same for the convex and for the concave case but the constraints are slightly different.

  \paragraph{Constraining Variables $\boldsymbol{w_{i,j}}$ (Convex
    Case).}
  % We introduce a sequence of constraints that ensure that variables
  % $w_{i,j}$ (and respective $z_{i,j,\ell}$ variables) indeed behave as
  % we intend them to.  
  First, for each variable $z_{i, j, \ell}$ we introduce two
  constraints:
  \begin{align}\label{eq:constraint_y}
    \begin{split}
      z_{i, j, \ell} &\geq 0,\\
      z_{i, j, \ell} &\geq x_i - \piece(f_{i,j}, \ell).
    \end{split}
  \end{align}
  Second, for each $i \in [n+r]$ we introduce the constraint:
  \begin{align}\label{eq:constraint_z}
    x_i \cdot \der(f_{i,j}, 0) + \sum_{\ell =
      1}^{\mathclap{\quad\pieces(f_{i,j})}} z_{i, j, \ell} \cdot (\der(f_{i,j}, \ell) -
    \der(f_{i,j}, \ell-1)) \leq w_{i, j}.
  \end{align}

  \paragraph{Constraining Variables $\boldsymbol{u_{i,j}}$ (Concave
    Case).}
  Analogously to the convex case, we first introduce two constraints
  for each variable $y_{i, j, \ell}$ (these two constraints are almost
  identical to the convex case):
  \begin{align}\label{eq:constraint_y_concave}
    \begin{split}
      y_{i, j, \ell} &\geq 0,\\
      y_{i, j, \ell} &\geq x_i - \piece(g_{i,j}, \ell).
    \end{split}
  \end{align}
  Second, for each $i \in [n+r]$ we introduce the constraint:
  \begin{align}\label{eq:constraint_z_concave}
    u_{i, j} \le x_i \cdot \der(g_{i,j},0) + \sum_{\ell = 1}^{\mathclap{\quad \pieces(g_{i,j})}} y_{i,j,\ell}
    \bigg(\der(g_{i,j},\ell)-\der(g_{i,j}, \ell-1)\bigg).
  \end{align}

  \paragraph{Correctness.}
  We now argue that if there is a feasible solution for our
  transformed \MIP instance, then there also is one for the original
  \eMIP instance. Let us assume some feasible solution for the
  transformed instance and focus on some arbitrary constraint number $j$
  (recall Equations~\eqref{eq:really_original} and~\eqref{eq:original}).

  In order to satisfy Constraint~\eqref{eq:original}, smaller values
  of $w_{i, j}$ are clearly more desirable.  Since each~$w_{i, j}$
  only occurs once on the left-hand side in
  Constraint~\eqref{eq:original} and once on the right-hand side in one
  of the constraints of the from~\eqref{eq:constraint_z}, we can
  assume that each constraint of the form~\eqref{eq:constraint_z} is
  satisfied with equality (given a feasible solution, we can keep on
  decreasing the values $w_{i,j}$ until we hit equalities in these
  constraints).  Further, together with the fact that $f_{i, j}$ is
  convex, and consequently $\der(f_{i,j}, \ell) > \der(f_{i,j},
  \ell-1)$ for each $\ell$, we infer that the values $z_{i, j, \ell}$
  can be as small as possible.  Formally, similarly as above, using
  constraints of the form~\eqref{eq:constraint_y}, we can assume that
  in our feasible solution  for each
  variable~$z_{i, j, \ell}$ it holds that $z_{i, j, \ell} = \max(0,
  x_i - \piece(f_{i,j}, \ell))$.  Consequently, we conclude that there
  is a feasible solution where:
  \begin{align*}
    w_{i, j} = x_i \cdot \der(f_{i,j}, 0) + \sum_{\ell = 1}^{\mathclap{\quad \pieces(f_{i,j})}}
    \max(0, x_i - \piece(f_{i,j}, \ell)) \cdot (\der(f_{i,j}, \ell) -
    \der(f_{i,j}, \ell-1)).
  \end{align*}
  Let us now analyze the value $w_{i,j}$ provided by this equality.
  If $x_i=0$, then we surely have $w_{i, j} = 0 = f_{i,j}(0)$.  Next,
  we analyze how the value of $w_{i,j}$ changes when we increase~$x_i$
  to~$\Delta$.  If $x_i$ does not exceed $\piece(f_{i,j}, 0)$, then
  $w_{i,j}$ has value $\Delta \cdot \der(f_{i,j}, 0)$ and we have
  $w_{i,j} = f(\Delta)$.  If $x_i$ is greater than $\piece(f_{i,j},
  0)$ but smaller than $\piece(f_{i,j}, 1)$, then $w_{i,j}$ has value
  $\Delta \cdot \der(f_{i,j}, 0) + (\Delta - \piece(f_{i,j}, 0)) \cdot
  (\der(f_{i,j}, 1) - \der(f_{i,j}, 0)) = f_{i,j}(\Delta)$. By
  analogous reasoning, we obtain $w_{i,j} = f_{i,j}(x_i)$ for every
  value of $x_i$.

  Analogous reasoning shows that we can also assume that under our
  feasible solution it holds that $u_{i,j} =
  g_{i,j}(x_i)$. Specifically, we note that to satisfy
  Constraint~\eqref{eq:original}, larger values of $u_{i, j}$ are
  clearly more desirable.  Since each~$u_{i, j}$ only occurs once on
  the right-hand side in Constraint~\eqref{eq:original} and once on the
  left-hand side in one constraint of the
  form~\eqref{eq:constraint_z_concave}, we infer that each constraint
  of the form~\eqref{eq:constraint_z_concave} can be satisfied with
  equality.  Further, together with the fact that $g_{i, j}$ is
  concave, and consequently $\der(g_{i,j}, \ell) < \der(g_{i,j},
  \ell-1)$ for each $\ell$, we infer that the values $y_{i, j, \ell}$
  can be as small as possible.  Formally, similarly as above, we infer
  from constraints of the form~\eqref{eq:constraint_y_concave} that we
  can assume that in our feasible solution for each variable~$y_{i, j,
    \ell}$ it holds that $y_{i, j, \ell} = \max(0, x_i -
  \piece(g_{i,j}, \ell))$.  Consequently, we have that:
  \begin{align*}
    u_{i, j} = x_i \cdot \der(g_{i,j}, 0) + \sum_{\ell = 1}^{\mathclap{\quad \pieces(g_{i,j})}}
    \max(0, x_i - \piece(g_{i,j}, \ell)) \cdot
    \bigg(\der(g_{i,j},\ell)-\der(g_{i,j}, \ell-1)\bigg).
  \end{align*}
  Analysis analogous to that for the case of variables $w_{i,j}$ shows
  that, indeed, we have $u_{i,j} = g_{i,j}(x_i)$.

  The above reasoning, together with the fact that constraints of the
  form~\eqref{eq:original} are satisfied and clearly correspond to the
  original constraints in the \eMIP problem, proves that if there is a
  feasible solution for the transformed instance, then there also is
  one for the original instance. 
  From our reasoning it is also apparent that the other implication holds
  (given variables $x_i$, it suffices that for each $i$, $j$, and
  $\ell$ we set $w_{i,j} = f_{i,j}(x_i)$, $u_{i,j} = g_{i,j}(x_i)$,
  $z_{i, j, \ell} = \max(0, x_i - \piece(f_{i,j}, \ell))$, and $y_{i,
    j, \ell} = \max(0, x_i - \piece(g_{i,j}, \ell))$).

  % Let us focus now on the above equality.  If $x_i=0$, then we
  % obviously have $u_{i, j} = 0 = g_{i,j}(0)$.  Next, we analyze how
  % the value of $u_{i,j}$ changes when we increase~$x_i$ by~$\Delta$.
  % If $x_i$ does not exceed $\piece(g_{i,j}, 0)$, then $u_{i,j}$
  % increases by $\Delta \cdot \der(g_{i,j}, 0)$.  If $x_i$ is greater
  % than $\piece(g_{i,j}, 0)$ but smaller than $\piece(g_{i,j}, 1)$,
  % then $u_{i,j}$ increases by $\Delta \cdot \der(g_{i,j}, 0) + (\Delta
  % - \piece(g_{i,j}, 1)) \cdot (\der(g_{i,j},1)-\der(g_{i,j}, 0)) =
  % g_{i,j}(\Delta)$, and so on.  Thus, we see that $u_{i, j} =
  % g_{i,j}(x_i)$.

  % Summarizing, we successively replaced each non-linear constraint by
  % a set of equivalent linear constraints, so any feasible solution for
  % our constructed \MIP instance immediately gives us a feasible
  % solution for the original \eMIP instance.

  \paragraph{Running Time.}
  The running-time of the algorithm from \autoref{thm:mip} invoked
  on our transformed instance is upper-bounded by
  ${n}^{2.5\cdot{n}+o({n})}\cdot |I^*|^{O(1)}$, where $n$~denotes the
  number of integer variables and $|I^*|$~is the number of bits needed
  to encode our MIP.  Finally, $|I^*|^{O(1)}$ can be upper-bounded by
  $(|I|+\pmax)^{O(1)}$ since we introduced at most
  $O(\pmax\cdot(n+r))$ additional constraints and variables.  This
  completes the proof.
\end{proof}

We conclude this section with two observations regarding the
generality of \autoref{thm:general_theorem}.  First, note that in
this section we used the canonical form of \eMIP, requiring all the
variables to be non-negative.  Yet, as long as we do not actually use
the piecewise linear transformations on a variable~$x_i$ (that is, as
long as for each $j$, functions $f_{i,j}$ and $g_{i,j}$ are linear) we
can use the standard technique of replacing each occurrence of~$x_i$
with $x_i^+ - x_i^-$, where $x_i^+$ and $x_i^-$ are two nonnegative
variables denoting, respectively, the positive and the negative part
of~$x_i$.  In other words, we may allow negative values for each
variable~$x_i$ whose associated functions~$f_{i,j}$ and~$g_{i,j}$ are
simply linear.

Second, note that if a certain function $f_{i,j}$ is applied to an
integer variable $x_i$ and we know that the value of $x_i$ must come
from some set $\{0, \ldots, t_i\}$, then it suffices that $f_{i,j}$ is
convex/concave on integer arguments, and we do not have to worry about
it being piecewise linear. This last condition is vacuously satisfied:
It suffices to consider pieces $[0,1), [1,2), [2,3), \ldots,
[t_i,t_i+1)$, and derivatives $\der(f,0) = f(1)-f(0)$, $\der(f,1) =
f(2)-f(1)$, $\der(f,2) = f(3)-f(2)$, and so on. It turns out that in
many applications we apply convex/concave functions to integer
variables only and this observation comes handy.

% Second, note that, similarly to standard \MIP{}s, objective functions
% can be simulated by an additional constraint on the expression to be
% optimized, and by decreasing or increasing the bound in such
% constraint in a binary-search manner. It is also possible to minimize
% (respectively, maximize) objective functions using piecewise linear
% concave (respectively, convex) transformations.

\section{Covering and Approval Voting: Showcases of the 
  Technique}\label{sec:showcase}

In this section we demonstrate how to apply
\autoref{thm:general_theorem} to resolve the complexity of a
number of problems related to covering and approval voting.  These are
interesting because the complexity of some of the problems we consider
was open for the last ten years or so.  We also show that
\autoref{thm:general_theorem} is useful in designing an $\fpt$
approximation scheme for the \textsc{Multiset Multicover} problem.
This result is interesting because it is somewhat involved
technically and illustrates mixing of our \eMIP-based approach with
combinatorial arguments.

\subsection{Weighted Multiset Multicover with Small Universe}\label{section_main}

We start by focusing on the complexity of a few generalizations of the
\textsc{Max Cover} problem.
%As many of the studied covering problems
%involve multisets, the following notation will be useful. 
If $A$ is a multiset and $x$ is some element, then we write $A(x)$ to
denote the number of times $x$ occurs in~$A$ (that is, $A(x)$ is $x$'s
multiplicity in~$A$). If $x$ is not a member of~$A$, then~$A(x) = 0$.

\begin{definition}
  In the \textsc{Weighted Multiset Multicover} (WMM) problem we are
  given a family $\calS = \{S_1, \ldots, S_n\}$ of multisets over the
  universe $U = \{x_1, \ldots, x_m\}$, integer weights $w_1, \ldots,
  w_n$ for the multisets, integer covering requirements $r_1, \ldots,
  r_m$ for the elements of the universe, and an integer budget $B$. We
  ask whether there exists a subfamily~$\calS' \subseteq \calS$ of
  multisets from $\calS$ such that:
  \begin{enumerate}
  \item for each $x_i \in U$ it
  holds that $\sum_{S_j \in \calS'}S_j(x_i) \geq r_i$ (that is, each
  element~$x_i$ is covered at least the required number of times), and 
  \item $\sum_{S_j \in
    \calS'}w_j \leq B$ (the budget is not exceeded).
\end{enumerate}
\end{definition}

We will show how the complexity of \textsc{WMM} (parameterized with
respect to the universe size) changes as we keep on adding
restrictions.
First, we observe that a straightforward polynomial-time reduction
from \textsc{Partition} proves that \textsc{WMM} is $\np$-hard even
for the case of a single-element universe.  Clearly, this also means
that the problem is $\paranp$-hard with respect to the number of
elements in the universe as the parameter.

\begin{proposition}\label{prop:wmsmcpara}
  \textsc{WMM} is $\np$-complete even for the case of a single-element
  universe.
\end{proposition}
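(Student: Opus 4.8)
The plan is to establish $\np$-membership by the obvious guess-and-check certificate and $\np$-hardness by a direct reduction from the classic \textsc{Partition} problem. Membership is immediate: a candidate solution is a subfamily $\calS' \subseteq \calS$, whose description has polynomial size, and both the covering condition $\sum_{S_j \in \calS'} S_j(x_1) \geq r_1$ and the budget condition $\sum_{S_j \in \calS'} w_j \leq B$ can be checked in polynomial time by summing multiplicities and weights. So the whole work is in the hardness reduction.

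For hardness I would reduce from \textsc{Partition}: given positive integers $s_1, \ldots, s_n$ that sum to $2S$, decide whether some subset of the indices sums to exactly $S$. From such an instance I construct a \textsc{WMM} instance over the single-element universe $U = \{x_1\}$ as follows. For each $j$ I introduce one (indexed) multiset $S_j$ with $S_j(x_1) = s_j$ and weight $w_j = s_j$; I set the single covering requirement to $r_1 = S$ and the budget to $B = S$. This transformation is plainly computable in polynomial time, and it uses a universe of a single element, so the claimed restriction is met.

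To argue correctness, note that for a subfamily $\calS'$ the covering condition reads $\sum_{S_j \in \calS'} s_j \geq S$ and the budget condition reads $\sum_{S_j \in \calS'} s_j \leq S$, because coverage and weight of each multiset coincide. Together these two inequalities force $\sum_{S_j \in \calS'} s_j = S$. Hence a feasible subfamily exists if and only if the given integers admit a subset summing to exactly $S$, i.e., if and only if the \textsc{Partition} instance is a yes-instance. (Treating $\calS$ as an indexed family sidesteps any concern about repeated values among the $s_j$.)

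There is no genuine technical obstacle here; the only point worth double-checking is the ``collapsing'' trick of setting both the per-set coverage and the per-set weight equal to $s_j$, while setting both the requirement and the budget to $S$. This is exactly what turns the separate ``at least $S$'' and ``at most $S$'' constraints into the single exact-sum condition that characterizes \textsc{Partition}, and everything else is elementary bookkeeping.
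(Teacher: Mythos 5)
Your proof is correct and is essentially the same as the paper's: both reduce from \textsc{Partition} to a single-element universe by making each multiset's multiplicity equal to its weight ($k_i$), with covering requirement and budget both set to half the total sum, so that the coverage and budget constraints force an exact partition. Your write-up just spells out the membership check and the equality argument slightly more explicitly.
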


\begin{proof}
  Membership in $\np$ is clear. We show $\np$-hardness by a reduction
  from the \textsc{Partition} problem.  An instance of
  \textsc{Partition} consists of a sequence of nonnegative integers
  $k_1, \ldots, k_n$. We ask if there is a set $I \subseteq
  [n]$ such that $\sum_{i \in I} k_i = \frac{1}{2} \sum_{i=1}^n k_i = \sum_{i \notin I} k_i$.
  
  We form an instance of $\wmsmc$ as follows.  The universe contains a
  single element $x$ with covering requirement equal to
  $\frac{1}{2}\sum_{i=1}^nk_i$.  For each $i$, $1 \leq i \leq n$,
  there is a single multiset $S_i$ containing $k_i$ occurrences of
  $x$, with weight $k_i$. We set the budget to be
  $\frac{1}{2}\sum_{i=1}^nk_i$.  Clearly, it is possible to cover~$x$
  sufficiently many times if and only if our input instance of
  \textsc{Partition} is a ``yes''-instance.
\end{proof}

Another variant of \textsc{WMM} is \textsc{Multiset Multicover}, where
we assume each set to have unit weight.
% It is also interesting to consider \textsc{Weighted Multiset
% Multicover} without weights: in the corresponding \textsc{Multiset
% Multicover} problem we assume that each multiset has the same unit
% weight.
By generalizing the proof for \autoref{prop:wmsmcpara},
%polynomial-time reduction from \textsc{Subset Sum}
we show that this problem is %shown
$\np$-hard already for two-element universes, which again implies
$\paranp$-hardness with respect to the number of elements in the
universe.

\begin{proposition}\label{prop:msmcpara}
  $\msmc$ is $\np$-complete even for universes of size two. %of size two. % with only two elements.
\end{proposition}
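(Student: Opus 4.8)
The plan is to establish $\np$-membership trivially (a guessed subfamily $\calS'$ can be checked for coverage and budget in polynomial time) and to obtain $\np$-hardness by generalizing the single-element reduction behind \autoref{prop:wmsmcpara} from \textsc{Partition}. The conceptual difficulty is that in $\msmc$ all weights equal one, so the budget now only bounds the \emph{number} of chosen multisets and no longer records the value $k_i$ of a chosen item. In the single-element proof the weight did double duty: it supplied the upper bound $\sum_{i\in I}k_i\le K/2$ which, together with the covering lower bound $\sum_{i\in I}k_i\ge K/2$, pinned the selected sum to exactly $K/2$. With unit weights this upper bound disappears, and a two-element universe is precisely what is needed to recover it: a second element will carry a \emph{complementary} quantity whose covering requirement plays the role of the missing upper bound on the $x$-sum.

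Concretely, given a \textsc{Partition} instance $k_1,\dots,k_n$ with $K=\sum_{i=1}^n k_i$ (if $K$ is odd we output a trivial no-instance), I would set $U=\{x,y\}$ and $C=\max_i k_i$, and for each $i$ create a multiset $S_i$ with $k_i$ copies of $x$ and $C-k_i$ copies of $y$ (all multiplicities are nonnegative). I would set the covering requirement of $x$ to $K/2$, which forces $\sum_{i\in I}k_i\ge K/2$ for the chosen index set $I$. The key observation is that the $y$-coverage of a selection equals $tC-\sum_{i\in I}k_i$, where $t$ is the total number of chosen multisets (including the dummy ones introduced below); requiring this to be at least $nC-K/2$, together with the budget $B=n$ (hence $t\le n$), yields $\sum_{i\in I}k_i\le (t-n)C+K/2\le K/2$. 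Combining the two bounds forces $\sum_{i\in I}k_i=K/2$ (and, incidentally, $t=n$), i.e.\ $I$ is exactly a solution to \textsc{Partition}.

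The remaining wrinkle, and the step I expect to need the most care, is that the argument above also pins the total number of chosen sets to exactly $n$, whereas a genuine \textsc{Partition} solution $I$ may have any cardinality. To decouple the cardinality from the value constraint I would add $n$ auxiliary ``dummy'' multisets, each consisting of $C$ copies of $y$ and no copies of $x$; selecting $n-|I|$ of them pads any $I$ up to $t=n$ sets without touching the $x$-coverage, while contributing exactly the $y$-mass that the requirement $nC-K/2$ expects. I would then verify both directions: a \textsc{Partition} solution yields a feasible selection (the real sets $I$ plus $n-|I|$ dummies, so $B=n$ sets, $x$-coverage $K/2$, $y$-coverage exactly $nC-K/2$); conversely, any feasible selection satisfies the derived equality $\sum_{i\in I}k_i=K/2$, and an all-dummy selection is ruled out since it gives $x$-coverage $0<K/2$. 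All multiplicities, requirements, the budget, and the number of multisets ($2n$) are polynomially bounded, so the reduction runs in polynomial time. Here I would treat $\calS$ as a list in which a multiset may repeat, which is natural for a multiset-covering problem, with the $n$ identical dummy sets distinguished only by index.
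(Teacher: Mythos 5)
Your proof is correct, and it differs from the paper's in a meaningful way while sharing the same core trick. Both arguments hinge on a second universe element whose multiplicities are \emph{complements} of the values $k_i$, so that its covering requirement, combined with the cap on the number of chosen sets, converts into an upper bound on the selected $x$-sum; this is exactly the mechanism you identified as the role of the second element. The difference is how the cardinality of the selection is controlled. The paper reduces from a \emph{balanced} variant of \textsc{Subset Sum} (choose exactly $n$ of $2n$ numbers summing to $T$) and uses a huge complement $nKT-k_i$, so that any feasible cover is forced to use exactly $n$ sets and no padding is needed; this buys a shorter argument but requires invoking NP-hardness of that cardinality-constrained variant. You reduce from plain \textsc{Partition} (the same source as \autoref{prop:wmsmcpara}, which makes the two propositions pleasantly uniform) and decouple cardinality from the value constraint by adding $n$ dummy multisets consisting only of copies of $y$, so that any solution $I$ can be padded to exactly $n$ chosen sets; your backward direction correctly recovers $\sum_{i\in I}k_i=K/2$ from the two coverage inequalities and $t\le n$. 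The repeated dummy multisets are unproblematic since $\calS$ is a family (the paper's own reduction also produces identical multisets when $k_i=k_j$). One small wording quibble: the multiplicities and requirements ($C$, $K/2$, $nC-K/2$) are not polynomially \emph{bounded} in value---\textsc{Partition} is only weakly NP-hard, so these can be exponential in the input length---but they have polynomially many bits, which is all that is needed for the reduction to run in polynomial time.
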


\begin{proof}
  Membership in $\np$ is clear. To show $\np$-hardness, we
  give a reduction from a variant of the \textsc{Subset Sum}
  problem. We are given a sequence $k_1, \ldots, k_{2n}$ of positive
  integers, a target value~$T$, and we ask if there is a set $I
  \subseteq \{1, \ldots, 2n\}$ such that (a) $\sum_{i \in I}k_i = T$,
  and (b) $\|I\| = n$.

  Let $K$ be $\max(k_1, \ldots, k_{2n})$.  We form an instance of
  $\msmc$ that contains two elements, $x_1$ and $x_2$.  For each $i$,
  $1 \leq i \leq 2n$, we form a set $S_i$ that contains $x_1$ with
  multiplicity~$k_i$, and $x_2$ with multiplicity~$nKT-k_i$. We set
  the covering requirement $r_1$ of $x_1$ to be $T$, and the covering
  requirement $r_2$ of $x_2$ to be $n^2KT-T$. We ask if there is a
  multiset multicover of size at most $n$.

  Clearly, if there is a solution for our \textsc{Subset Sum}
  instance, then the sets that correspond to this solution form a multiset
  multicover of $x_1$ and $x_2$. On the contrary, assume that there
  is a collection of at most $n$ sets that form a multiset multicover
  of $x_1$ and~$x_2$.  There must be exactly~$n$ of these
  sets. Otherwise, the sum of their multiplicities for~$x_2$ would be
  smaller than~$n^2KT-T$.  Due to the covering requirement of~$x_1$,
  these sets correspond to the numbers from $\{x_1,\ldots, x_{2n}\}$
  that sum up to at least $T$, and due to covering requirement of
  $x_2$, these sets correspond to numbers that sum up to at most
  $T$. This completes the proof.
\end{proof}

Often we do not need the full flexibility of \textsc{WMM}. For
instance, in the next section we will describe several problems from
computational social choice that can be reduced to more specific
variants of \textsc{WMM}, such as the \textsc{Weighted Set Multicover}
and \textsc{Uniform Multiset Multicover} problems. 
\textsc{Weighted Set Multicover} is a variant of \textsc{WMM} where
each input multiset has elements with multiplicities 0 or~1 (in other
words, the family $\calS$ contains sets without multiplicities, but
the union operation takes multiplicities into account).
\textsc{Uniform Multiset Multicover} is a variant of \textsc{Multiset
  Multicover} (and, thus, of \textsc{WMM}), where for each multiset
$S_i$ in the input instance there is a number $t_i$ such that for each
element $x$
% from the universe
we have $S_i(x) \in \{0,t_i\}$ (in other words, elements within a
single multiset have the same multiplicities). 
% We refer to this
% variant of the problem as \textsc{Uniform Multiset Multicover}.

As the first application of our new framework, we show that
\textsc{Weighted Set Multicover} is fixed-parameter tractable when
parameterized by the universe size.  Notably, we only use convex
constraints in the construction of the \eMIP instance.

\begin{theorem}\label{thm:wsmc-fpt}
  \textsc{Weighted Set Multicover} is fixed-parameter tractable 
when parameterized by
  the universe size.
\end{theorem}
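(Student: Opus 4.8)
The plan is to mimic the integer linear program for \textsc{Set Multicover} given in the introduction, but to replace its budget constraint by a single convex piecewise-linear budget constraint, thereby obtaining an \eMIP instance to which \autoref{thm:general_theorem} applies. Concretely, for each subset $A \subseteq U$ I would let $\calS(A)$ be the subfamily of the input sets that are equal to $A$, and set $b_A = |\calS(A)|$. I would then introduce one integer variable $x_A$ per subset $A$, meant to hold the number of sets from $\calS(A)$ chosen into the solution. Since there are at most $2^m$ subsets of the $m$-element universe, the number of integer variables is bounded by a function of the parameter alone.

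For the constraints I would keep the availability constraints $x_A \le b_A$ and the covering constraints $\sum_{A \colon x_i \in A} x_A \ge r_i$ from the \textsc{Set Multicover} program; both are linear and hence trivially expressible in the \eMIP form (a linear function is simultaneously convex and concave). The new ingredient is the budget constraint. For each $A$, I would sort the weights of the sets in $\calS(A)$ nondecreasingly as $w^{(1)}_A \le \cdots \le w^{(b_A)}_A$ and define $\mathrm{weight}_A(k) = \sum_{t=1}^{k} w^{(t)}_A$ to be the minimum total weight of any $k$ sets drawn from $\calS(A)$. The single budget constraint is then $\sum_{A \subseteq U} \mathrm{weight}_A(x_A) \le B$, which fits the \eMIP template with the convex functions on the left-hand side.

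The key point to verify is that each $\mathrm{weight}_A$ is convex on the integer arguments $\{0, 1, \ldots, b_A\}$: its successive increments are exactly $w^{(1)}_A, w^{(2)}_A, \ldots$, which are nondecreasing by construction. By the second observation following \autoref{thm:general_theorem}, convexity on the relevant integer range $0 \le x_A \le b_A$ is all that is needed, with the natural pieces placed at the integer breakpoints; the integrality requirements on $f(0)$, on the breakpoints, and on the derivatives hold because the weights are integers. The correctness of the reduction would then rest on a simple exchange argument: in any feasible \wsmc solution, replacing the chosen sets of each group $A$ by the $x_A$ cheapest sets of $\calS(A)$ leaves the coverage of every element unchanged (all sets in $\calS(A)$ cover exactly $A$) and does not increase the total weight, so the minimum cost of selecting $x_A$ sets from group $A$ is precisely $\mathrm{weight}_A(x_A)$; conversely, any integer solution of the program yields a genuine \wsmc solution by picking, for each $A$, the $x_A$ cheapest sets.

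Finally, the running-time bound follows from \autoref{thm:general_theorem}: the number of integer variables is at most $2^m$, a function of the parameter $m$; the maximum number of pieces per function is at most $b_A + 1 \le n+1$, which is polynomial in the input size; and the remaining quantities are polynomially bounded. I expect the main obstacle to be not the algebra but the clean justification of the exchange argument---in particular, arguing that restricting attention to ``cheapest-first'' selections within each group loses no generality, so that the convex function $\mathrm{weight}_A$ faithfully captures the cost of any group-$A$ selection of a given size. Everything else is a routine translation of the \textsc{Set Multicover} ILP into the \eMIP form.
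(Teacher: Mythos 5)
Your proposal is correct and matches the paper's own proof essentially step for step: one integer variable per subset type, the same availability and covering constraints, and the convex function $\mathrm{weight}_A$ (the sum of the $x_A$ smallest weights in $\calS(A)$) in a single budget constraint, solved via \autoref{thm:general_theorem}. Your added care about the exchange argument and the integer-convexity observation only makes explicit what the paper leaves implicit.
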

\begin{proof}
  Consider an instance of \textsc{Weighted Set Multicover} with
  universe $U = \{x_1, \ldots, \allowbreak x_m\}$, family $\calS =
  \{S_1, \ldots, S_n\}$ of subsets, weights $w_1, \ldots, w_n$ for the
  sets, covering requirements $r_1, \ldots, r_m$ for the elements, and
  budget $B$. Our algorithm proceeds by solving an appropriate \eMIP
  instance.

  First, we form a family $U_1, \ldots, U_{2^m}$ of all subsets 
  of~$U$. For each~$i$, $1 \leq i \leq 2^m$, let $\calS(U_i):=\{S_j \in \calS \mid S_j=U_i \}$.
  %be the
  %subfamily of $\calS$ containing exactly those sets which contain
  %the same elements as $U_i$ does.
  For each~$i$ and $j$, $1 \leq i \leq 2^m$ we define a convex
  function $f_i$ so that for each integer $j$, $1 \leq j \leq
  |\calS(U_i)|$, $f_i(j)$ is the sum of the $j$ lowest weights of the
  sets from $\calS(U_i)$.
  We have $2^m$ integer variables $z_i$, $1 \leq i \leq
  2^m$. Intuitively, these variables describe how many sets we take
  from each type (i.e., how many sets we take from each family
  $\calS(U_i)$).
  We introduce the following constraints: For each $i$, $1 \leq i \leq
  2^m$, we have constraints $z_i \geq 0$ and $z_i \leq |\calS(U_i)|$.  For
  each element $x_\ell$ of the universe, we also have constraint
  $\sum_{U_i\colon x_\ell \in U_i}z_i \geq r_\ell$.  These constraints
  ensure that the variables $z_i$ describe a possible solution for the
  problem (disregarding the budget). Our last constraint uses
  variables $z_i$ to express the requirement that the solution has
  cost at most $B$:
\begin{align*}
     \textstyle\sum_{i=1}^{2^m}f_i(z_i) \leq B \text{.}
\end{align*}
  Finally, we use \autoref{thm:general_theorem} to get the statement of the theorem.
\end{proof}

As a second application of our new framework, we show that
\textsc{Uniform Multiset Multicover} is fixed-parameter tractable when
parameterized by the universe size.  Notably, we only use concave
constraints in the corresponding \eMIP instance.

\begin{theorem}\label{thm:msmc-fpt}
  \textsc{Uniform Multiset Multicover} is fixed-parameter tractable 
when parameterized 
by the universe size.
\end{theorem}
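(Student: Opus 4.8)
The plan is to mirror the proof of \autoref{thm:wsmc-fpt} almost verbatim, but with the roles of ``weight'' and ``coverage'' interchanged, so that concave functions appear in place of the convex ones. The key observation is that in \textsc{Uniform Multiset Multicover} all multisets have unit weight (so the budget merely bounds the \emph{number} of chosen multisets), while each multiset $S_j$ is completely described by its support $\operatorname{supp}(S_j) \subseteq U$ together with a single height $t_j$, since $S_j(x) \in \{0, t_j\}$ for every $x$.

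First I would group the input multisets by their supports: for each $A \subseteq U$ let $\calS(A) := \{S_j \in \calS \mid \operatorname{supp}(S_j) = A\}$, and introduce one integer variable $z_A$ (there are $2^m$ of them, a function of the parameter $m$ alone) recording how many multisets from $\calS(A)$ are chosen. For a fixed count $z_A$, every chosen multiset in $\calS(A)$ contributes its full height to each element of $A$, so to cover the elements of $A$ as much as possible one should always choose the $z_A$ multisets of $\calS(A)$ with the \emph{largest} heights. This motivates defining, for each $A$, a function $h_A$ with $h_A(z)$ equal to the sum of the $z$ largest heights among the multisets of $\calS(A)$, and $h_A(0)=0$.

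The central step is the concavity of $h_A$: its $z$-th marginal $h_A(z)-h_A(z-1)$ equals the $z$-th largest height, which is non-increasing in $z$; hence $h_A$ is concave on $\{0,1,\dots,|\calS(A)|\}$, with integer breakpoints and integer non-increasing derivatives, so it meets the technical requirements for \eMIP (and by the final observation of \autoref{sec:general_technique} it even suffices that $h_A$ is concave on integer arguments). With these functions in hand, the coverage requirement for element $x_\ell$ becomes the single concave constraint $\sum_{A \colon x_\ell \in A} h_A(z_A) \ge r_\ell$, which I would cast in the \eMIP form $\sum_i f_{i,\ell}(z_i) \le \sum_i g_{i,\ell}(z_i) + b_\ell$ by taking every $f_{i,\ell} \equiv 0$, taking $g_{i,\ell} = h_A$ precisely for the variable index $i$ corresponding to a set $A$ with $x_\ell \in A$ (and $g_{i,\ell}\equiv 0$ otherwise), and setting $b_\ell = -r_\ell$. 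The remaining constraints---the budget $\sum_A z_A \le B$ and the capacities $0 \le z_A \le |\calS(A)|$---are linear and hence both convex and concave, so they fit trivially. Feeding this \eMIP instance, which uses $2^m$ integer variables and functions with at most $n+1$ pieces, into \autoref{thm:general_theorem} then yields the claimed $\fpt$ algorithm.

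The main obstacle I anticipate is not in the machinery but in the correctness argument: I must verify that restricting attention to ``take the $z_A$ heaviest multisets of each type'' loses no solutions. Concretely, given any feasible subfamily $\calS'$, replacing its chosen multisets within each group $\calS(A)$ by the $z_A := |\calS' \cap \calS(A)|$ heaviest ones can only increase every element's coverage while leaving the count, and hence the budget, unchanged. This exchange argument is exactly what licenses equating the per-element coverage with $\sum_{A\colon x_\ell \in A} h_A(z_A)$ and guarantees the equivalence between feasible integer assignments $(z_A)$ and feasible subfamilies $\calS'$. Everything else is routine and parallels \autoref{thm:wsmc-fpt}.
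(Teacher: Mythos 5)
Your proposal is correct and takes essentially the same approach as the paper's proof: group the multisets by support, introduce one integer variable per support type, define the concave ``sum of the $z$ largest heights'' coverage function for each type, keep the budget and capacity constraints linear, and invoke \autoref{thm:general_theorem}. The concavity check and the exchange argument (replacing chosen multisets by the heaviest ones of each type) that you spell out are exactly the details the paper leaves implicit; the construction itself is identical.
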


\begin{proof}
  Consider an instance of \textsc{Uniform Multiset Multicover} with
  universe $U = \{x_1, \ldots, \allowbreak x_m\}$, family $\calS =
  \{S_1, \ldots, S_n\}$ of subsets, covering requirements $r_1,
  \ldots, r_m$ for the elements, and budget $B$.  Our algorithm
  proceeds by solving an appropriate \eMIP instance.

  Similarly as in the proof of \autoref{thm:wsmc-fpt}, we form a
  family $U_1, \ldots, U_{2^m}$ of all the subsets of $U$ (note that
  these, indeed, are subsets and not multisets).  For each $i$, $1
  \leq i \leq 2^m$, let $\calS(U_i)$ be a subfamily of $S$ that contains
  those multisets in which exactly the elements from $U_i$ appear
  (that is, their multiplicities are non-zero).  For each~$i$, $1 \leq
  i \leq 2^m$, we define a concave function $f_i$ so that for
  each~$j$, $1 \leq j \leq |\calS(U_i)|$, $f_i(j)$ denotes the maximum
  sum of multiplicities for each element from~$U_i$ using~$j$
  multisets from~$\calS(U_i)$.  (To compute this function, we simply need
  to sort the multisets from~$\calS(U_i)$ in the order of decreasing
  multiplicities.  Then, $f_i(j)$ is the sum of the multiplicities
  with respect to an arbitrary element from~$U_i$ of the first
  $j$~multisets.)

  We have $2^m$ integer variables $z_i$, $1 \leq i \leq 2^m$.
  Intuitively, the $z_i$~variables describe how many multisets we take from each type.
  Thus, $f_i(z_i)$~describes how much each element from~$U_i$ is covered by taking $z_i$~multisets of type~$U_i$.
  We introduce the following constraints: For each $i$, $1 \leq i \leq
  2^m$, we have constraints $z_i \geq 0$ and $x_i \leq |\calS(U_i)|$.  For
  each element $x_\ell$ of the universe, we also have constraint
  $\sum_{U_i\colon x_\ell \in U_i} f_i(z_i) \geq r_\ell$.  These
  constraints ensure that the variables $z_i$ describe a possible
  solution for the problem (disregarding the budget).  To express the
  requirement that the solution has cost at most~$B$, we add the
  constraint $\sum_{i=1}^{2^m}z_i \leq B$.
  Finally, we use \autoref{thm:general_theorem} to get the statement of the theorem.
\end{proof}

Unfortunately, it is impossible to apply our approach to the more general \textsc{Multiset
  Multicover}; by \autoref{prop:msmcpara},
\textsc{Multiset Multicover}
is already $\np$-hard for two-element universes. It is,
however, possible to obtain a certain
%---somewhat nonstandard---
form of an $\fpt$ approximation scheme.
%\footnote{While this result
%  does not, as of yet, have direct application to voting, we believe
%  it is quite interesting in itself.}

\begin{definition}
  Let $\epsilon$ be a real number, $\epsilon > 0$.  We say that
  algorithm $\mathcal{A}$ is an \emph{$\epsilon$-almost-cover
    algorithm} for \textsc{Multiset Multicover} if, given an input
  instance $I$ with universe $U = \{x_1, \ldots, x_m\}$ and covering
  requirements $r_1, \ldots, r_m$, it outputs a solution that covers
  each element~$x_i$ with multiplicity~$r_i'$ such that $\sum_i
  \max(0, r_i - r_i') < \epsilon \sum_i r_i$.
\end{definition}

In other words, on the average an $\epsilon$-almost-cover algorithm
can miss each element of the universe by an $\epsilon$-fraction of its
covering requirement.  For the case where we really need to cover all
the elements perfectly, \iffalse one might say that our algorithm is
not enough. We believe, however, that the algorithm is still useful
for such settings.  For example, prior to using the algorithm we could
scale up the covering requirements to ensure that even if the
algorithm ``undershoots,'' then we would still fulfill the
requirements, but possibly would get far from optimal solution. A more
subtle way is to \fi we might first run an $\epsilon$-almost-cover
algorithm and then complement its solution, for example, in some greedy
way, since the remaining instance might be much easier to solve.

The key idea regarding computing an $\epsilon$-almost-cover is that it
suffices to replace each input multiset by several sub-multisets, each
with a particular ``precision level,'' so that multiplicities of the
elements in each sub-multiset are of a similar order of magnitude.
The full argument, however, forms the most technical
part of our paper.

%%% As a side result of our research (though, perhaps, most technically
%%% demanding), we obtain the following result.
% (we omit the involved
% proof due to space limit; the written version is available upon
% request and

\begin{theorem}\label{thm:msmc-fpt-as}
  For every rational $\epsilon > 0$, there is an $\fpt$-time
  $\epsilon$-almost-cover algorithm for \textsc{Multiset Multicover}
  parameterized by the universe size.
\end{theorem}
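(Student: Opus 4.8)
The plan is to reduce an arbitrary \msmc instance (after fixing $B$) to an \eMIP instance of the same flavor as the one built for \textsc{Uniform Multiset Multicover} in \autoref{thm:msmc-fpt}, paying for this reduction with a controlled, provably small loss in coverage. The guiding principle is multiplicative (relative) bucketing: I would first cap every multiplicity at its requirement, replacing $S_j(x_i)$ by $\min(S_j(x_i),r_i)$ (harmless, since covering $x_i$ never needs more than $r_i$ from a single set), and then fix a granularity $\delta$ depending only on $\epsilon$ and sort the multiplicities into multiplicative buckets of ratio $(1+\delta)$. Splitting each input multiset along these buckets decomposes it into near-uniform ``sub-multisets,'' each supported on some $A\subseteq U$ and with all multiplicities agreeing up to a factor $(1+\delta)$; rounding every multiplicity down to its bucket value loses at most a $\delta$-fraction of each covered unit, so the \emph{total} rounding loss is at most $\delta\sum_i r_i\le(\epsilon/2)\sum_i r_i$. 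The point of relative bucketing is precisely that this loss is bounded independently of how many sets a solution uses, which a fixed absolute rounding could not guarantee.

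Assuming the number of relevant precision levels has been reduced to some $L=f(m,\epsilon)$ (see below), each original multiset now carries a \emph{level-profile}: an assignment of each of its elements to one of the $L$ levels, so the profile lives in a set of size at most $(L+1)^m=f(m,\epsilon)$. Multisets sharing a profile become identical after rounding, which dissolves the inter-element trade-off that makes general \msmc hard (this is exactly the obstruction behind \autoref{prop:msmcpara}). I would then mimic \autoref{thm:msmc-fpt}: introduce one integer variable per profile-type counting how many multisets of that type are selected, upper-bounded by availability and with unit cost (coupling all sub-multisets of a common parent, so the profile is the atomic unit and the budget constraint is simply the sum of these variables $\le B$); for each element $x_\ell$ the coverage is a sum, over profiles whose level on $x_\ell$ is nonzero, of concave prefix-sum functions of the corresponding variable, exactly as in the uniform case. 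Setting each coverage target to $(1-\epsilon/2)r_\ell$ absorbs the rounding slack, and since the number of integer variables is $f(m,\epsilon)$ and the number of pieces $\pmax$ is polynomial, \autoref{thm:general_theorem} solves the resulting \eMIP in \fpt\ time; any feasible selection it returns is a genuine selection of at most $B$ original multisets whose true (unrounded) coverage dominates the rounded one.

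The hard part will be the deferred step: reducing the number of precision levels that matter, per element, to $f(m,\epsilon)$. The high end is easy, as capping at $r_i$ confines levels to $(1+\delta)^t\le r_i$; the difficulty is entirely at the low end. One cannot simply discard contributions below a fixed per-element threshold such as $\epsilon r_i/m$: the instance in which every set contributes multiplicity $1$ to an element with huge $r_i$ shows that \emph{all} useful contributions may lie below any such threshold, so the threshold must adapt to the scale a solution actually employs rather than to $r_i$ alone. The technical core — and, I expect, the place where the paper spends its effort and calls the argument ``the most technical part'' — is a combinatorial transformation of a given cost-$B$ cover into a near-optimal one whose coverage of each element is concentrated in only $O\!\big(\log(m/\epsilon)/\delta\big)=f(m,\epsilon)$ consecutive scales, while the aggregate of all neglected tiny contributions is charged against the budget $(\epsilon/2)\sum_i r_i$. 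Once this concentration is in hand, $L$ is bounded by a function of $m$ and $\epsilon$, the profile count is $f(m,\epsilon)$, and combining the concentration loss with the rounding loss gives a total miss $\sum_i\max(0,r_i-r_i')<\epsilon\sum_i r_i$ for a selection respecting the budget, which is exactly the \fpt-time $\epsilon$-almost-cover guarantee claimed.
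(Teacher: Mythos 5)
Your plan hinges on the deferred step, and that step is not merely difficult---it is false. Your integer variables are indexed by level-profiles over \emph{absolute} multiplicative buckets, so you need the number of relevant levels to be $f(m,\epsilon)$: with $L=\Theta(\log(\max_i r_i)/\delta)$ levels you get $(L+1)^m$ integer variables, and Kannan/Lenstra's $N^{O(N)}$ running time then becomes $2^{\mathrm{polylog}(|I|)}$ with the polylog degree depending on $m$, i.e.\ quasi-polynomial rather than $\fpt$. But no transformation of a budget-$B$ cover can concentrate coverage onto $f(m,\epsilon)$ scales, consecutive or not. Consider an element $x$ with requirement $r=t\cdot 2^{t-1}$ and, for each $j\in\{0,\dots,t-1\}$, exactly $2^{t-j-1}$ input multisets contributing $2^j$ to $x$ and nothing else, with budget $B=2^t-1$ (add a dummy second element if you wish; it changes nothing). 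The total available coverage equals $r$ exactly, so any $\epsilon$-almost-cover must retain all but an $\epsilon$-fraction of the contributions; since each of the $t$ occupied scales carries exactly a $1/t$-fraction of the coverage, any $W$ scales carry at most $(W/t)r$, hence every $\epsilon$-almost-cover spans at least $(1-\epsilon)t=\Theta(\log r)$ distinct scales. As $\log r$ grows with the encoding length, no bound $f(m,\epsilon)$ on the number of relevant levels exists, and your profile count cannot be controlled.

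The paper escapes exactly this trap by never discretizing absolute values. Its \autoref{alg:msmc-fpt-as-transformation} splits each multiset only at positions where consecutive sorted multiplicities jump by more than a factor $Y=f(m,\epsilon)$, so every emitted piece has internal multiplicity ratios at most $Y^m$ and is written as $\beta V_j$, where $V_j$ is one of $\bigl(2Y^m/\epsilon+1\bigr)^m$ \emph{scale-free shapes} and $\beta$ is an arbitrary per-set scale that is never bucketed. Within each shape class the pieces are totally ordered by $\beta$, and the \eMIP uses one integer variable per shape together with a genuinely \emph{concave} function giving the coverage obtained from the $v_j$ largest-scale pieces of that shape---this is what makes the variable count independent of the range of multiplicities (in my counterexample all sets share one shape, so a single variable suffices; in your formulation, by contrast, sets sharing a profile are identical after rounding, so your prefix-sum functions are linear and do no real work). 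Note also that the paper's technical core is not a concentration lemma but a \emph{replacement} lemma: each set of an optimal solution is replaced by a single one of the pieces emitted for it, and the case analysis with Observations~1--4 bounds the total miss by $\epsilon\sum_i r_i$. So the gap in your proposal is not a missing proof of your key claim, but that the claim itself must be abandoned in favor of a different decomposition.
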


\begin{proof}
  We describe our $\epsilon$-almost-cover algorithm for
  \textsc{Multiset Multicover} and argue its correctness.  We consider
  an instance $I$ of \textsc{Multiset Multicover} with a family $\calS
  = \{S_1, \ldots, S_n\}$ of multisets over the universe $U = \{x_1,
  \ldots, x_m\}$, where the covering requirements for the elements of
  the universe are $r_1, \ldots, r_m$.  We associate each set $S$ from
  the family $\calS$ with the vector $v_{S} = \langle S(x_1), S(x_2),
  \ldots, S(x_m) \rangle$ of element multiplicities.

  Let $\epsilon > 0$ be the desired approximation ratio.
  We fix $Z = \lceil \frac{4m}{\epsilon} \rceil$ and $Y =
  Z + \lceil \frac{4Zm^3}{\epsilon} \rceil$.  Note that
  $\frac{m}{Z} \leq \frac{\epsilon}{4}$ and $\frac{Zm^3}{Y-Z} \leq
  \frac{\epsilon}{4}$.
  Let $X = \left(\frac{2Y^m}{\epsilon} + 1\right)^m$ and let $V_1,
  \ldots, V_X$ be a sequence of all $m$-dimensional vectors whose
  entries come from the $\left(\frac{2Y^m}{\epsilon} +
    1\right)$-element set $\left\{0, \frac{\epsilon}{2}, \epsilon,
  \frac{3\epsilon}{2}, 2\epsilon, \dots, Y^m \right\}$. For each $j$, $1
  \leq j \leq X$, we write $V_j = \langle V_j(x_1), V_j(x_2), \dots,
  V_{j}(x_m) \rangle$. Intuitively, these vectors describe some subset
  of ``shapes'' of all possible multisets---interpreted as vectors of
  multiplicities---over our $m$-element universe.  For each number
  $\beta$, we write $\beta V_i$ to mean the vector $\langle \lfloor
  \beta V_{i, 1} \rfloor, \lfloor \beta V_{i, 2} \rfloor, \dots,
  \lfloor \beta V_{i, m} \rfloor\rangle$.

  Vectors of the form~$\beta V_i$ are approximations of
  those multisets for which the positive multiplicities of the
  elements do not differ
%   between each other
  too much (formally, for those multisets for which the positive
  multiplicities differ by at most a factor of $Y^m$). Indeed, for
  each such multiset~$S$, we can find a value $\beta$ and a vector
  $V_j$ such that for each element $x_i$ it holds that $S(x_i) \geq
  \beta V_j(x_i) \geq \left(1 -
    \frac{\epsilon}{2}\right)S(x_i)$. However, this way we cannot
  easily approximate those sets for which multiplicities differ by
  large factors. For example, consider a set $S$ represented through
  the vector $\langle 0, \dots, 0, 1, Q\rangle$, where $Q \gg Y^m$, in
  particular where $Q \gg \frac{2Y^m}{\epsilon}$. For each value
  $\beta$ and each vector $V_j$, the vector $\beta V_{j}$ will either
  be inaccurate with respect to the multiplicity of element $x_{m-1}$
  or with respect to the multiplicity of element $x_m$ (or with
  respect to both these multiplicities).

  The main step of our algorithm is to modify the instance $I$ so that we
  replace each multiset~$S$ from the family $\calS$ with a sequence of
  vectors of the form $\beta V_j$ that altogether add to at most the
  multiset~$S$ (each such sequence can contain
  multiple vectors of different ``shapes'' $V_j$ and of different
  scaling factors $\beta$). The goal is to obtain an instance that on
  the one hand consists of ``nicely-structured'' sets (vectors) only,
  and on the other hand has the following property: If in the initial
  instance $I$ there exist $K$ sets that cover elements $x_1, \dots,
  x_m$ with multiplicities $r_1, \dots, r_m$, then in the new
  instance there exist $K$ sets that cover elements $x_1, \dots, x_m$
  with multiplicities $r_1', \dots, r_m'$,
  such that $\sum_i \max(0, r_i - r_i') <  \epsilon \sum_i r_i$.
  We refer to this as the \emph{almost-cover approximation property}.

  The procedure for replacing a given set $S$ is presented as
  \autoref{alg:msmc-fpt-as-transformation}. This algorithm calls
  the \texttt{Emit} function with arguments $(\beta, V)$ for each
  vector $\beta V$ that it wants to output ($V$ is always one of the
  vectors $V_1, \ldots, V_X$). The emitted sets replace the set $S$
  from the input.  Below we show that if we apply \autoref{alg:msmc-fpt-as-transformation}
  to each set from $\calS$,
  then the resulting instance $I'$ has our almost-cover approximation
  property.

\newcommand{\mycommentfont}[1]{{\emph{#1}}}

\SetCommentSty{mycommentfont}
\LinesNumbered
\begin{algorithm}[t!]
%   \footnotesize
   \SetKwInput{KwNotation}{Notation}
   \SetKwFunction{Main}{Main}
   \SetKwFunction{sort}{sort}
   \SetKwFunction{Emit}{Emit}
   \SetKwFunction{RoundAndEmit}{Round\_And\_Emit}
   \SetKwFunction{MainRec}{Main\_Rec}
   \SetKwBlock{Block}
   \SetAlCapFnt{\footnotesize}
   \caption{\label{alg:msmc-fpt-as-transformation}The transformation algorithm used in the proof of
     \autoref{thm:msmc-fpt-as}. The algorithm replaces a given set $S$
     with a sequence of vectors of the form $\beta V_j$.}
   %\KwNotation{}
   \Main{$S$}:
   \Block{
      multip $\leftarrow \langle (1, S(x_1)), (2, S(x_2)), \dots, (m, S(x_m)) \rangle$\;
      sorted $\leftarrow$ \sort{$\mathrm{multip}$} \nllabel{algline::line1} \tcp*{sort in ascending order of multiplicities}
      $i \leftarrow 0$ \;
      \tcp{\quad $\mathit{sorted[i].first}$ refers to the $i$'th item's number}
      \tcp{\quad $\mathit{sorted[i].second}$ refers to its multiplicity}
      \While{$\mathrm{sorted[}i\mathrm{].second} = 0$} {
         $i \leftarrow i + 1$ \;
      }
      \MainRec{$i$, $\mathrm{sorted}$} \;
   }
   \hspace{5mm} \\
   \MainRec{$i$, $\mathrm{multip}$}:
   \Block{
      $V \leftarrow \langle 0, 0, \dots, 0 \rangle$ (vector of $m$ zeros). \;
      $\beta \leftarrow$ multip[$i$].second \;
      $V$[multip[$i$].first] $\leftarrow$ 1 \;
      $i \leftarrow i + 1$ \;
      \While{$i \leq m$}{
         \eIf{$\mathrm{multip[}i\mathrm{].second} < Y \cdot \mathrm{multip[}i\mathrm{-1].second}$} {
            $V$[multip[$i$].first] $\leftarrow \frac{\mathrm{multip[}i\mathrm{].second}}{\beta}$ \nllabel{algline::line2}\;
            $i \leftarrow i + 1$ \;
         }
         {
            \For{$j \leftarrow i$ \KwTo $m$ \nllabel{algline::line3}} {
               $V$[multip[$j$].first] $\leftarrow$ $\frac{Z \cdot \mathrm{multip[}i\mathrm{-1].second}}{\beta}$ \nllabel{algline::line35} \;
            }
            \RoundAndEmit{$\beta$, $V$} \nllabel{algline::line4}\;
            \For{$j \leftarrow 1$ \KwTo $m$ \nllabel{algline::line5}} {
               $\mathrm{multip[}j\mathrm{].second} \leftarrow \mathrm{multip[}j\mathrm{].second} - \beta V[\mathrm{multip[}j\mathrm{].first}]$ \nllabel{algline::line6} \;
            }
            \MainRec{$i$, $\mathrm{multip}$} \nllabel{algline::line7} \;
            \Return{}
         }
      }
      \RoundAndEmit{$\beta$, $V$}\;
   }
   \hspace{5mm} \\
   \RoundAndEmit{$\beta$, $V$}:
   \Block{
      \For{$\ell \leftarrow 1$ \KwTo $m$ \nllabel{algline::line8}} {
         $V[\ell] \leftarrow$ $\lfloor \frac{2V[\ell]}{\epsilon} \rfloor / \frac{\epsilon}{2}$\;
      }
      \Emit{$\beta$, $V$}\;
   }
\end{algorithm}

% In order to show that the above property, indeed, holds we need to
% introduce some additional notation. We also present this notation in
% \autoref{fig:msmc-fpt-as-transformation}. 

Let us consider how \autoref{alg:msmc-fpt-as-transformation}
proceeds on a given set $S$.  For the sake of clarity, let us assume
there is no rounding performed by
\autoref{alg:msmc-fpt-as-transformation} in function
\texttt{Round\_And\_Emit} (the loop in \autoref{algline::line8}). We
will come back to this issue later.

The algorithm considers the elements of the universe---indexed by
variable $i$ throughout the algorithm---in the order given by the
vector ``sorted'' (formed in \autoref{algline::line1} of
\autoref{alg:msmc-fpt-as-transformation}). Let $\prec$ be the
order in which \autoref{alg:msmc-fpt-as-transformation}
considers the elements (so $x_{i'} \prec x_{i''}$ means that $x_{i'}$
is considered before $x_{i''}$), and let $x'_1, \ldots, x'_m$ be the
elements from the universe renamed so that $x'_1 \prec x'_2 \prec
\cdots \prec x'_m$.  Let $r$ be the number of sets that
\autoref{alg:msmc-fpt-as-transformation} emits on our input set
$S$ and let these sets be $S_1, S_2, \dots, S_r$.  (This is depicted
on \autoref{fig:msmc-fpt-as-transformation}, where for the sake of
the example we take $m = 6$ and $r=3$.)

\setcounter{topnumber}{1}
\begin{figure}[t!]
  \begin{center}
    \includegraphics[scale=1.0]{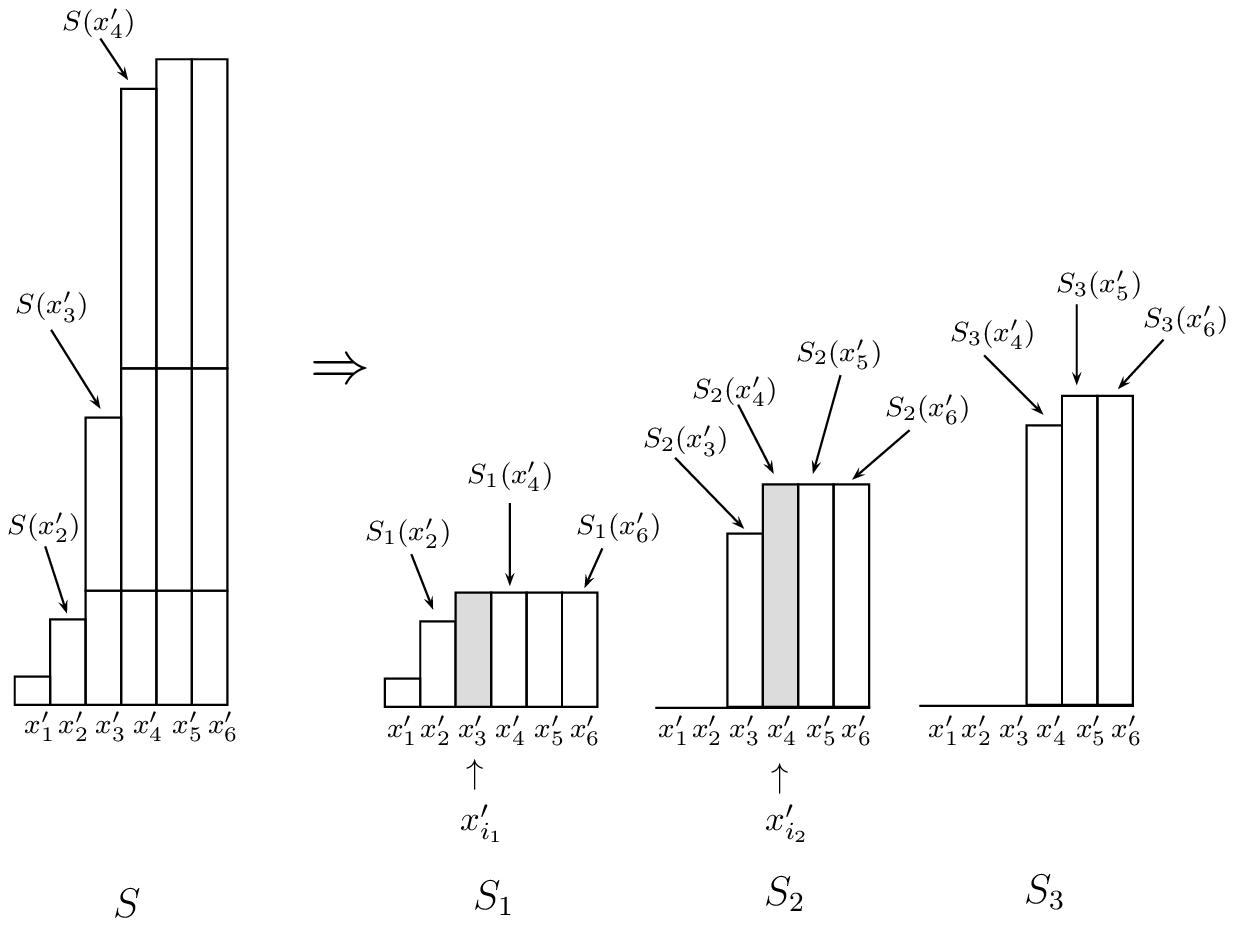}
  \end{center}
  \caption{An example for
    \autoref{alg:msmc-fpt-as-transformation}: The algorithm
    replaces $S$ with sets $S_1$, $S_2$, and $S_3$.}
  \label{fig:msmc-fpt-as-transformation}
\end{figure}

Consider the situation where the algorithm emits the $k$'th set, $S_k$,
and let $i_k$ be the value of variable $i$ right before the call to
\texttt{Round\_And\_Emit} that caused $S_k$ to be emitted.
% Let $x^e_t$ denote element $x'_i$, where $i$ is the 
% Let $x^e_j$ denote the element of the universe 
% When a set $S_i$ is emitted in \autoref{algline::line4} of
% \autoref{alg:msmc-fpt-as-transformation}, let $x_{e, i}$ denote
% the value of variable $i$ in
% \autoref{alg:msmc-fpt-as-transformation} during the call of
% \texttt{Round\_And\_Emit}. 
Note that each element $x$ from the universe such that $x'_{i_k} \prec
x$ has the same multiplicity in~$S_k$ as element~$x'_{i_k}$
(lines~\ref{algline::line3} and~\ref{algline::line35} of
\autoref{alg:msmc-fpt-as-transformation}).  Let $t_k =
\sum_{j}S_k(x'_j)$ be the sum of the multiplicities of the elements
from~$S_k$.
% Let $t_{i, j}$
% denote the multiplicity of element $x_{j}'$ in set $S_i$, and let $t_i
% = \sum_j t_{i, j}$.
%
We make the following observations:
\begin{description}
\item[Observation 1:] $S_k(x'_{i_k}) = Z \cdot S_k(x'_{i_{k}-1})$.
\item[Observation 2:] It holds that $S_{k+1}(x'_{i_k}) \geq Y \cdot
  S_{k}(x'_{i_k-1}) - Z \cdot S_{k}(x'_{i_k-1}) =
  (Y-Z)S_{k}(x'_{i_k-1}) = \frac{(Y-Z)}{Z}S_k(x'_{i_k})$.
% \begin{align*}
%   t_{j+1, i} \geq Y \cdot t_{j, i-1} - Z \cdot t_{j, i-1} = (Y-Z)t_{j, i-1} = \frac{(Y-Z)}{Z}t_{j, i} \textrm{.}
% \end{align*}
\item[Observation 3:] We have that $S_{k+1}(x'_{i_k}) \geq
  \frac{(Y-Z)}{Z}S_k(x'_{i_k}) \geq \frac{(Y-Z)}{Zm}t_k$. Further, we
  have that $ S_{k+1}(x'_{i_{k+1}}) \geq S_{k+1}(x'_{i_k}) \geq
  \frac{(Y-Z)}{Zm^2} \sum_{j \leq k} t_j$. (To see why the last
  inequality holds, note that $k \leq m$.)

\item[Observation 4:] For $i < i_k$ it holds that $\sum_{q \leq k} S_q(x'_i) = S(x'_i)$.

% \item[Observation 3:] Let $x_{i}' = x_{e, j}$. Then $t_{j+1, i} \geq \frac{(Y-Z)}{Z}t_{j, i} \geq \frac{(Y-Z)}{Zm}t_j$. Further: 
% \begin{align*}
% t_{j+1, i} \geq \frac{(Y-Z)}{Zm^2} \sum_{k \leq j} t_k \textrm{.}
% \end{align*}
%\item[Observation 4:] Let $x_{i}' = x_{e, j}$. For $k < i$: $\sum_{\ell \leq j} t_{j, k} = k_{S, k}$.
%\item{Observation 4:} As a consequence of Observation 3: $t_{j+1, i} \geq \frac{(Y-Z)}{m^2} \sum_{k \leq j} t_{k, i}$. Further, as $\sum_{k \leq j+1} t_{k, i} = k_{S, i}$, we have $t_{j+1, i} \geq \left(1 - \frac{m^2}{Y-Z}\right)k_{S, i}$.
\end{description}

\newcommand{\opt}{{\mathrm{opt}}}

Now let us consider some solution for instance $I$ that consists of
$K$ sets, $\mathcal{S^\opt} = \{S^\opt_1, S^\opt_2, \dots, S^\opt_K\}
\subseteq \calS$. These sets, altogether, cover all elements from
the universe with required multiplicities. That is, it holds that for
each $i$ we have $\sum_{S \in \mathcal{S^\opt}} S(x_i) \geq r_i$.  For
each set $S \in \calS^\opt$ and for each element $x_i$ from the
universe, we pick an arbitrary number $y_{S,i}$ so that altogether the
following conditions hold:
\begin{enumerate}
\item For every set $S \in \calS^\opt$ and every $x_i$, $y_{S, i} \leq  S(x_i)$.
\item For every $x_i$, $\sum_{S \in \mathcal{S^\opt}} y_{S, i} = r_i$.
\end{enumerate}
Intuitively, for a given set~$S$, the values $y_{S, 1}, y_{S, 2},
\dots, y_{S, m}$ describe the multiplicities of the elements from~$S$
that are \emph{actually used} to cover the elements. Based on these
numbers, we will show how to replace each set from $\calS^\opt$ with
one of the sets emitted for it, so that the resulting family of sets
has the almost-cover approximation property.

\newcommand{\argmax}{\operatornamewithlimits{argmax}}

Consider a set $S \in \calS^\opt$ for which
\autoref{alg:msmc-fpt-as-transformation} emits $r$ sets, $S_1,
S_2, \dots, S_r$. As in the discussion of
\autoref{alg:msmc-fpt-as-transformation}, let $x'_1, \ldots,
x'_m$ be the elements from the universe in which
\autoref{alg:msmc-fpt-as-transformation} considers them (when
emitting sets for $S$). We write $y'_{S,i}$ to mean the value
$y_{S,j}$ such that $x_j = x'_i$.  Let $\mathcal{R} = \{S_1, S_2,
\dots, S_r\}$, let $i_{\max} = \argmax_i y'_{S, i}$, and let
$S_{\mathrm{repl}}$ be the set from $\mathcal{R}$
defined in the following way:

\begin{figure}[t!]
  \begin{center}
    \hspace{-2cm}\includegraphics[scale=1.1]{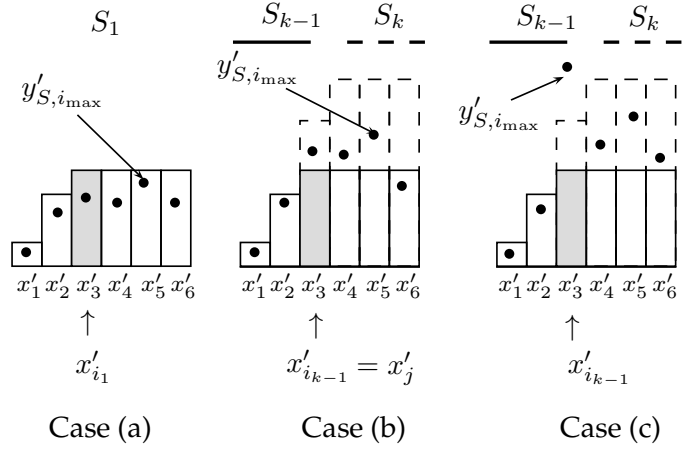}
  \end{center}
  \caption{The cases in the proof of
    \autoref{thm:msmc-fpt-as}. The bullets represent values $y'_{S, 1},
    \dots, y'_{S, m}$.}
  \label{fig:msmc-fpt-as-transformation2}
\end{figure}

\begin{enumerate}
\item If for every set $S_k \in \mathcal{R}$ we have $ 
  S_k(x'_{i_{\max}}) < y'_{S, i_{\max}}$, then $S_{\mathrm{repl}}$ is the
  set $S_k \in \mathcal{R}$ with the greatest value $S_k(x'_{i_{\max}})$ 
  (the set that covers element $x'_{i_{\max}}$ with the greatest
  multiplicity). This is the case denoted as ``Case (c)'' in
  \autoref{fig:msmc-fpt-as-transformation2}.
\item Otherwise $S_{\mathrm{repl}}$ is the set $S_k \in \mathcal{R}$
  that has the lowest value $S_k(x'_{i_{\max}})$, yet no-lower than
  $y'_{S, i_{\max}}$.
% ($S_{\mathrm{repl}} = \mathrm{argmin}_{S_k \in
%    \mathcal{R}; S_k(x_{i^S_{\max}}) \geq y_{S, i^S_{\max}}}
%  S_k(x_{i^S_{\max}})$). 
  This is the case denoted as either ``Case (a)'' or ``Case (b)'' in
  \autoref{fig:msmc-fpt-as-transformation2}.
\end{enumerate}
We now show that $S_{\mathrm{repl}}$ is a good candidate for replacing
$S$, that is, that 
% \begin{align*}
% \sum_i \max(0, y_{S, i} - y_{S_{\mathrm{repl}}, i}) < (1 - \epsilon) \sum_i y_{S, i} \textrm{.}
% \end{align*}
$$\sum_i \max(0, y'_{S, i} - S_{\mathrm{repl}}(x'_i)) < \epsilon \sum_i y'_{S, i}.$$
To this end, we consider the three cases depicted in
\autoref{fig:msmc-fpt-as-transformation2}:
\begin{description}
\item[Case (a)] It holds that $y'_{S, i_{\max}} <
  S_1(x'_{i_{\max}})$ (that is, $S_1$ already covers the most demanding
  element of the universe to the same extent as $S$ does).  This means
  that we have $\sum_{\ell} \max(0, y'_{S, \ell} - S_1(x'_\ell)) = 0$.
  By the criterion for choosing set $S_{\mathrm{repl}}$, we have that
  $S_{\mathrm{repl}} = S_1$.

\item[Case (b)] There exist sets $S_{k-1}, S_k \in \mathcal{R}$ such
  that $S_k(x'_{i_{\max}}) \geq y'_{S, i_{\max}} >
  S_{k-1}(x'_{i_{\max}})$ (and thus, $S_{\mathrm{repl}} = S_k$). Let
  $x'_j = x'_{i_{k-1}}$ (recall from the discussion of
  \autoref{alg:msmc-fpt-as-transformation} that $i_{k-1}$ is the
  index of the universe element which caused emitting $S_{k-1}$). Let
  us consider two subcases:
  \begin{enumerate}
  \item[(i)] $y'_{S, i_{\max}} \leq S_k(x'_j)$: We first note that for
    each $i \geq j$ it holds that $y'_{S, i} \leq
    S_{k}(x'_i)$. Further, for each $i < j$, we have $y'_{S, i} \leq
    \sum_{\ell \leq k-1} S_\ell(x'_i)$ (this follows from
    Observation~4 and the fact that $y'_{S,i} \leq S(x'_i)$). Based on
    this inequality, we get:
    \begin{align*}
      \sum_{i < j}y'_{S, i} & \leq \sum_{i < j}\sum_{\ell \leq k-1} S_\ell(x'_i)
      \leq \sum_{\ell \leq k-2} t_{\ell} + \sum_{i < j} S_{k-1}(x'_i)  \\
      &\leq \frac{Zm^2}{(Y - Z)} S_{k-1}(x'_j) + \frac{m}{Z} S_{k-1}(x'_j)   \text{\quad\quad\quad\quad(Observations 3 and 1)} \\
      &\leq \frac{\epsilon}{2}S_{k-1}(x'_j) \leq \frac{\epsilon}{2} y_{S,
        i_{\max}} \textrm{.} &
    \end{align*}
    In consequence, it holds that
      $\sum_{\ell} \max(0, y'_{S, \ell} - S_k(x'_\ell)) <
      \frac{\epsilon}{2} \sum_\ell y'_{S, \ell}$.\medskip

    \item[(ii)] $y'_{S, i_{\max}} > S_k(x'_j)$: 
      For $\ell \geq i_k$,
      we have $S_k(x'_\ell) \geq y'_{S, \ell}$ (this follows from the
      fact that $S_k(x'_{i_{\max}}) \geq y'_{S,i_{\max}}$ and the
      definition of $i_k$). For $\ell < i_k$, by Observation 4 and the
      fact that $y'_{S,\ell} \leq S(x'_\ell)$, we have $y'_{S, \ell}
      - S_k(x'_\ell) \leq \sum_{q \leq k-1} S_q(x'_\ell)$. Thus we
      get:
    \begin{align*}
      \sum_{\ell} \max(0,& y'_{S, \ell} - S_k(x'_\ell)) 
      \leq \sum_{\ell < i_k} \max(0, y'_{S, \ell} - S_k(x'_\ell)) \\
      &\leq \sum_{\ell < i_k} \sum_{q \leq k-1} S_q(x'_\ell) 
       \leq \sum_{q \leq k-1} t_{q} \\
      &\leq \frac{Zm^2}{(Y - Z)} S_k(x'_j) & \text{(Observation 3)} \\
      &\leq \frac{Zm^2}{(Y - Z)} y'_{S, i_{\max}} \leq
      \frac{\epsilon}{2} \sum_\ell y'_{S, \ell} \textrm{.} &
    \end{align*}
  \end{enumerate}
\item[Case (c)] Every set $S_{k} \in \mathcal{R}$ has $S_k(x'_{i_{\max}}) <
  y'_{S, i_{\max}}$. 

  %We set $S_k = S_{\mathrm{repl}}$ to use notation analogous to the previous cases.
  By the choice of $S_k$ (the set
  from $\calR$ that has highest multiplicity of $x'_{i_{\max}}$), we
  infer that $\sum_{q \leq k} S_q(x'_{i_{\max}}) =
  S(x'_{i_{\max}})$. 
  Also, for every $\ell < i_{\max}$, we have $\sum_{q \leq k}
  S_q(x'_\ell) = S(x'_\ell)$.
  Consequently, for every $\ell \leq i_{\max}$ we have $$y_{S, \ell} -
  S_k(x'_\ell) \leq \sum_{q \leq k-1} S_{q}(x'_{\ell}) \leq \sum_{q \leq k-1} S_{q}(x'_{i_{\max}}).$$
  Further, for every $\ell > i_{\max}$, we have $$y'_{S, \ell} -
  S_k(x'_\ell) \leq y'_{S, i_{\max}} - S_{k}(x'_{i_{\max}}) \leq
  S(x'_{i_{\max}}) - S_{k}(x'_{i_{\max}}) \leq \sum_{q \leq k-1}
  S_{q}(x'_{i_{\max}}).$$ Based on these observations, we get
  the following:
  \begin{align*}
    \sum_{\ell} \max(0, y_{S, \ell} - S_k(x'_\ell)) &\leq m \sum_{q \leq k-1} S_q(x'_{i_{\max}}) \leq m \sum_{q \leq k-1} t_{q} &\\
    &\leq m \frac{Zm^2}{(Y - Z)} S_{k}(x'_{i_k})  \text{\quad\quad\quad\quad\quad\quad\quad(Observation 3)} \\
    &\leq \frac{Zm^3}{(Y - Z)} y'_{S, i_{\max}} \leq
    \frac{\epsilon}{2} y'_{S, i_{\max}} \leq \frac{\epsilon}{2}
    \sum_\ell y'_{S, \ell} \textrm{.} &
  \end{align*}
  Thus we obtain the desired bound.
\end{description}

The above case analysis almost shows that we indeed have the
almost-cover approximation property. It remains to consider the issue
of rounding (\autoref{algline::line8} of
\autoref{alg:msmc-fpt-as-transformation}).  This rounding
introduces inaccuracy that is bounded by factor
$\frac{\epsilon}{2}$ and thus, indeed, we do have the almost-cover
approximation property.

Now, given the new instance $I'$, it suffices to find a solution for
$I'$ that satisfies the desired approximation guarantee (that is, a
collection $\calS'$ of at most $K$ sets that form an
$\epsilon$-almost-cover).  It is possible to do so using our technique from \autoref{sec:general_technique}.

Let us recall that the new instance consists of the sets that are of
the form $\beta V_j$ (recall the discussion at the beginning of the
proof). For each vector $V_j$, $1 \leq j \leq X$, we introduce an
integer variable $v_j$, which, intuitively, gives the number of sets
with shape $V_j$ taken into the solution. Further, for each $v_j$, $1
\leq j \leq X$, and each $x_i$, $1 \leq i \leq m$, we introduce a
concave function $f_{i, j}$, so that $f_{i,j}(v_j)$ is the maximum
multiplicity with which element $x_i$ is covered by some $v_j$
``best'' sets of the shape $V_j$ (these are the $v_j$ sets which have
been emitted with the highest values of $\beta$).  Finally, we
introduce variables $\miss_1, \dots, \miss_m$, responsible for
measuring the inaccuracy levels (in other words, $\miss_i$ gives the
missing multiplicity for element~$x_i$).  The constraints for our
mixed integer linear program are given below:
\begin{enumerate}
\item $\sum_{j=1}^{X} v_j \leq B$.
\item For each $j$, $1 \leq j \leq X$: $v_j \geq 0$.
\item $\sum_{i=1}^{m} \miss_i \leq \epsilon \sum_i r_i$.
\item For each $i$, $1 \leq i \leq m$: $\sum_{j=1}^{X} f_{i,j}(v_j) \geq r_i - \miss_i$.
\end{enumerate}
One can verify that solutions to this program directly
correspond to $\epsilon$-almost-covers for instance~$I$. This
completes the proof.
\end{proof}

\subsection{From Covering Problems to Approval Voting}
In this section we show a relation between several problems regarding
Approval voting and the covering problems studied above. In
consequence, we will explain how our technique can be used for
obtaining fixed-parameter tractability results for these voting
problems.

We model an election as a pair $E = (C,V)$, where $C = \{c_1, \ldots,
c_m\}$ is a set of candidates and $V = (v_1, \ldots, v_n)$ is a
collection of voters. Each voter is represented through his or her
preferences. For the case of Approval voting, each voter's preferences
take the form of a set of candidates approved by this voter.  The
candidate(s) receiving the most approvals are the winner(s). In other
words, we assume the nonunique-winner model (if several candidates
have the same number of approvals, then we view each of them as
winning).  We write $\score_E(c_i)$ to denote the number of voters
approving~$c_i$ in election~$E$.  We refer to elections that use
Approval voting and represent voter preferences in this way as
approval elections.
% 
% \begin{example}
% PF: commented for ADT: I suppose our readers here will know
\iffalse
As an example, consider an approval election $E = (C,V)$, where $C = \{c_1, c_2, c_3, c_4\}$
  and $V = (v_1, v_2, v_3)$. Let the voter preferences be such that $v_1 =
  \{c_1, c_4\}$, $v_2 = \{c_3\}$, and $v_3 = \{c_2,c_3\}$. In this
  case, candidate $c_3$ is the unique Approval winner with
  $\score_E(c_3) = 2$.
\fi
%\end{example}
%
In a weighted election, voters also have integer weights in addition
to their preferences. A voter $v$ with weight $\omega(v)$ counts as
$\omega(v)$ copies of an unweighted voter.\footnote{There is a name
  clash between the literature on covering problems and that on
  elections. In the former, ``weights'' refer to what the voting
  literature would call ``prices.''  Weights of the voters are modeled
  as multiplicities of the elements in the multisets. We kept the
  naming conventions from the respective parts of the literature to
  make our results more accessible to researchers from both
  communities.}

%with the same preferences.
We are interested in the following three problems.
%  (the first one is
% due to Faliszewski et al.~\cite{FHH09} and the latter two are due to
% Bartholdi et al.~\cite{BTT92}, in their unpriced and unweighted
% variants, and due to Faliszewski et al.~\cite{FHH13} and Magiera and
% Fa\-liszewski~\cite{MF14} in their weighed/priced variants,
% respectively).

\begin{definition}[Bartholdi et al.~\cite{BTT92}, Faliszewski et al.~\cite{FHH09,MF16}]
  In each of the problems Approval-\textsc{\$Bribery} (priced bribery), Approval-\textsc{\$CCAV}
  (priced control by adding voters), and Approval-\textsc{\$CCDV}
  (priced control by deleting voters), we are given an
  approval election $E = (C,V)$ with $C = \{p, c_1, \ldots, c_m\}$
  and $V = (v_1, \ldots, v_n)$, and an integer budget $B$. In
  each of the problems the goal is to decide whether it is possible to
  ensure that $p$ is a winner, at a cost of at most~$B$. The problems
  differ in the allowed actions and possibly in some
  additional parts of the input:
  \begin{enumerate}[leftmargin=1.2em,labelsep=0.5em]\fixlist

  \item In Approval-\textsc{\$Bribery}, for each voter $v_i$, $1 \leq
    i \leq n$, we are given a nonnegative integer price~$\pi_i$; for
    this price we can change $v_i$'s approval set in any way we
    choose.
  \item In Approval-\textsc{\$CCAV} (CCAV stands for ``Constructive 
    Control by Adding Voters'') we are given a collection $Q =
    (q_1, \ldots, q_{n'})$ of additional voters.  For each additional
    voter $q_i$, $1 \leq i \leq n'$, we also have a nonnegative integer
    price $\pi_i$ for adding~$q_i$ to the original election.
  \item In Approval-\textsc{\$CCDV} (CCDV stands for ``Constructive 
    Control by Deleting Voters''), we have a
    nonnegative integer price $\pi_i$ for removing each voter $v_i$ from the
    election.
  \end{enumerate}

  \noindent In the weighted variants of these problems (which we denote by
  putting ``\textsc{Weighted}'' after ``Approval''), the input
  elections (and all the voters) are weighted; in particular, each voter $v$ has an
  integer weight $\omega(v)$.
  The unpriced variants of these problems (denoted by omitting the
  dollar sign from their names) are defined identically, except that
  all prices have the same unit value.
\end{definition}

% (that is, we have $\pi_i = 1$ for each $i$)
%.

The above problems are, in essence, equivalent to certain covering
problems. Briefly put, the relation between \textsc{WMM} and various
election problems (as those defined above) is that the universe
corresponds to the candidates in the election, the multisets
correspond to the voters, and the covering requirements depend on
particular actions that we are allowed to perform.

\begin{construction}\label{const:1}
  Consider an instance of Approval-\textsc{\$CCDV} with election $E =
  (C,V)$, where $C = \{p, c_1, \ldots, c_m\}$ and $V = (v_1,
  \ldots, v_n)$, with prices $\pi_1, \ldots, \pi_n$ that one needs to pay to
  the respective voters in order to convince them not to
  participate in the election, and with budget $B$. We can express
  this instance as an instance of \textsc{Weighted Multiset
    Multicover} as follows. For each voter $v_i$ not approving $p$,
    we form a multiset $S_i$ with weight $\pi_i$ that includes
  exactly the candidates approved by $v_i$, each with multiplicity
  exactly one. For each candidate $c_i$, $1 \leq i \leq m$, we set its
  covering requirement to be $\max(\score_E(c_i)-\score_E(p),0)$. 
  There is a way to ensure $p$'s victory by
  deleting voters of total cost at most $B$ if and only if it is
  possible to solve the presented instance of \textsc{Weighted
    Multiset Multicover} with budget $B$.
\end{construction}

Naturally, we do not use the full generality of \textsc{WMM} in
\autoref{const:1}; in fact, we provide a reduction to \textsc{Weighted
  Set Multicover}, where the multiplicities of input multisets are
either 0 or~1.  This is important since
\autoref{prop:wmsmcpara} says that \textsc{WMM} is $\np$-hard
even for a single element in the universe.  From the viewpoint of
voting theory, it is also interesting to consider 
% a restricted variant of
\textsc{Uniform Multiset Multicover}, where for each multiset $S_i$ in the
input instance there is a number $t_i$ such all elements belonging to
$S_i$ have multiplicity either equal to zero or to $t_i$.  Using an
argument similar to that used in \autoref{const:1}, it is easy to show
that \textsc{Uniform Multiset Multicover} is, in essence, equivalent
to Approval-\textsc{Weighted-CCDV}.

In \autoref{const:1} we have considered Approval-\textsc{\$CCDV}
because, among our problems, it is the most straightforward one to
model via a covering problem. Nonetheless, constructions with similar
flavor are possible both for Approval-\textsc{\$CCAC}
%(by, in effect, counting how many times each candidate is not approved)
and for Approval-\textsc{\$Bribery}.
%(by slightly more complicated tricks).
Formally, we have the following result.
\begin{proposition}\label{prop:vot2cover}
  \textsc{Approval-\$CCAV}, \textsc{Approval-\$CCDV}, \textsc{Approval-\$Bribery},
  \textsc{Approval-Weighted-CCAV}, and \textsc{Approval-Weighted-CCDV} are 
  fixed-parameter tractable when parameterized by the number of candidates.
%
%
  % If, parameterized by the universe size,
  % \textsc{Weighted Set Multicover}
  % and \textsc{Uniform Multiset Multicover}
  % are fixed-parameter tractable, then, parameterized by the number of candidates,
  % each of
  % Approval-\textsc{\$CCAV},
  % Approval-\textsc{\$CCDV},
  % Approval-\textsc{\$Bribery},
  % \textsc{Approval-Weighted-CCAV},
  % and \textsc{Approval-Weighted-CCDV}
  % is fixed-parameter tractable.
%  (when parameterized by the number of candidates).
\end{proposition}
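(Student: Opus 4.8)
The plan is to handle the five problems by reducing each either to a covering problem or, for bribery, to a direct \eMIP instance whose universe is the set of non-distinguished candidates $\{c_1,\ldots,c_m\}$, and then to invoke the machinery already developed. Throughout, the key normalization is that a rational briber/controller never performs an action that helps an opponent without helping $p$: for the control-by-deleting problems we may assume we only delete voters who do not approve $p$, for the control-by-adding problems we only add voters who \emph{do} approve $p$, and for bribery we may assume every bribed voter is turned into a voter approving $p$ alone. Under these normalizations the win condition $\score_E(p) \ge \score_E(c_j)$ turns into a family of covering requirements $\max(\score_E(c_j)-\score_E(p),0)$, one per opponent, and the parameterization by the number of candidates coincides with the parameterization by the universe size.

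For \textsc{Approval-\$CCDV} the reduction is exactly \autoref{const:1}: each deletable voter becomes a set (weighted by its deletion price) covering the opponents it approves, so the problem is \textsc{Weighted Set Multicover} and \autoref{thm:wsmc-fpt} applies. For \textsc{Approval-\$CCAV} I would give the analogous construction: after restricting to pool voters that approve $p$, adding such a voter $q$ raises $p$'s score by one and raises the score of each opponent it approves by one, so relative to $p$ it \emph{reduces} the deficit of exactly those opponents that $q$ does \emph{not} approve. Hence $q$ becomes a set (weighted by its addition price) covering the complement of its opponent-approvals, the covering requirements are again $\max(\score_E(c_j)-\score_E(p),0)$, and \autoref{thm:wsmc-fpt} applies. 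For the two weighted-vote, unit-price variants the same two constructions go through verbatim, except that a voter of weight $\omega$ now contributes $\omega$ (rather than $1$) to each opponent it covers while costing one unit of budget; since every opponent covered by a given voter is covered with the \emph{same} multiplicity $\omega$, the resulting instances are exactly \textsc{Uniform Multiset Multicover}, and \autoref{thm:msmc-fpt} yields fixed-parameter tractability of \textsc{Approval-Weighted-CCAV} and \textsc{Approval-Weighted-CCDV}.

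The remaining and most delicate case is \textsc{Approval-\$Bribery}, which does not fit any single named covering problem, because one bribe simultaneously \emph{adds} an approval to $p$ (when the bribed voter did not already approve $p$) and \emph{removes} approvals from opponents, so the two effects are coupled and a single bribe may reduce an opponent's deficit by two. I would therefore build an \eMIP instance directly. Group the voters by their approval type $T \subseteq \{p,c_1,\ldots,c_m\}$---there are at most $2^{m+1}$ types---and introduce one integer variable $z_T$ counting how many type-$T$ voters are bribed, constrained by $0 \le z_T \le n_T$. Turning a type-$T$ voter into an approver of $p$ alone reduces the deficit of opponent $c_j$ by $[\,p\notin T\,]+[\,c_j\in T\,]$, so the win condition for each $c_j$ is the \emph{linear} constraint $\sum_T z_T\big([\,p\notin T\,]+[\,c_j\in T\,]\big) \ge \score_E(c_j)-\score_E(p)$. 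The budget constraint is the only nonlinear one: the cheapest way to bribe $z_T$ voters of type $T$ costs the sum of the $z_T$ smallest prices in that type, which is a \emph{convex} function $g_T$ of $z_T$, so $\sum_T g_T(z_T) \le B$ is precisely of the form handled by \autoref{thm:general_theorem}. With $O(2^{m})$ integer variables and convex transformations only, \autoref{thm:general_theorem} solves the instance in $\fpt$ time.

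The main obstacle I anticipate is not the \eMIP invocation but the normalization lemmas that justify each reduction---in particular verifying, for bribery, that restricting every bribe to produce the single-approval vote $\{p\}$ loses no optimal solution, and that the per-type cost functions are genuinely convex (so that their ``lowest-$z_T$-prices'' interpretation is valid and \autoref{thm:general_theorem} is applicable). Once these elementary exchange arguments are in place, each of the five claims follows from the cited theorem.
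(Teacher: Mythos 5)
Your proposal is correct, and for four of the five problems (Approval-\textsc{\$CCDV}, Approval-\textsc{\$CCAV}, and the two weighted variants) it coincides with the paper's own proof: the same normalizations (never delete a voter approving $p$, never add one not approving $p$), the same reductions to \textsc{Weighted Set Multicover} resp.\ \textsc{Uniform Multiset Multicover}, and the same invocations of \autoref{thm:wsmc-fpt} and \autoref{thm:msmc-fpt}. The genuine divergence is Approval-\textsc{\$Bribery}. You argue that the coupling of a bribe's two effects (a deficit may drop by two) blocks a reduction to a named covering problem, and you instead write a bespoke \eMIP: one integer variable $z_T$ per approval type, linear win constraints $\sum_T z_T\big([\,p\notin T\,]+[\,c_j\in T\,]\big) \ge \score_E(c_j)-\score_E(p)$, and a single convex budget constraint $\sum_T g_T(z_T)\le B$, solved via \autoref{thm:general_theorem}. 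This is sound---the normalization that every bribed voter ends up approving $p$ alone is a standard exchange argument, your deficit accounting is exactly right, and the per-type cheapest-cost functions are genuinely convex---and it has the advantage of needing no outer loop. The paper instead stays inside the covering framework: it guesses the number $\ell$ of additional approvals that $p$ gains, which fixes $p$'s final score $s^*=\score_E(p)+\ell$; a bribed voter then becomes a set containing the opponents it approved plus the element $p$ itself (the latter iff the voter did not approve $p$), with requirements $\max(\score_E(c)-s^*,0)$ for opponents and $\ell$ for $p$. This decouples the two effects---the addition side is absorbed into the requirement on element $p$, the removal side into the opponents' requirements---so the instance is again \textsc{Weighted Set Multicover} and \autoref{thm:wsmc-fpt} applies, at the price of a polynomial outer loop over $\ell$ (a simple Turing reduction, which the paper's proof explicitly allows). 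So your claim that bribery ``does not fit any single named covering problem'' is overstated---it does, after guessing---but your direct \eMIP formulation is a valid and arguably more economical alternative.
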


\begin{proof}
  We describe for each voting problem either a reduction to
  \textsc{Weighted Set Multicover} or to \textsc{Uniform Multiset
    Multicover}.  Formally, we either use standard many-one reductions
  or very simple special cases of Turing-reductions: In an outer loop,
  we iterate through certain values, which give an additional hint on
  how the solution looks like (we refer to it as ``guessing''), and
  then resolve the remaining problem by a transformation to one of the
  two covering problems.  We finally answer yes if one of the covering
  instances was a yes-instance.  In our reductions, the universe
  set~$U$ is always identical to the candidate set~$C$, but the
  covering requirements, the family~$\calS$ of the (multi)sets, the
  weights, and the prices differ.

 \paragraph{Approval-\textsc{\$CCDV}.} See \autoref{const:1}.
 
 \paragraph{Approval-\textsc{\$Bribery}.}
 Consider an instance of Approval-\textsc{\$Bribery} with election $E
 = (C,V)$, where $C = \{p, c_1, \ldots, c_m\}$ and $V = (v_1, \ldots,
 v_n)$, with prices $\pi_1, \ldots, \pi_n$ for changing the voter's
 approval set, and with budget $B$.  Observe that
 Approval-\textsc{\$Bribery} is very similar to
 Approval-\textsc{\$CCDV}, because we can assume without loss of
 generality that each bribed voter finally approves only
 candidate~$p$.  However, the decisive difference is that we do not
 know the final number of approvals that~$p$ will get because this
 depends on the given budget~$B$, on the prices of the voters, and on
 how many bribed voters already approved~$p$ (but together with some
 other candidates).  We circumvent this lack of knowledge by guessing
 the number~$\ell$ of additional approvals $p$~obtains through the
 bribery process.  This also gives us the
 score~$s^*:=\score_{E}(p)+\ell$ of~$p$ in the final election
 (containing the $\ell$~bribed voters).  Now, we have to ensure (i)
 that $p$~really obtains the guessed score and (ii) that all other
 candidates which originally have a higher score lose enough approvals
 through the bribery process.  We can express this as an instance of
 \textsc{Weighted Multiset Multicover} as follows.  For each voter
 $v_i \in V$, we form a multiset $S_i$ with weight $\pi_i$ that
 includes all the candidates approved by $v_i$, each with multiplicity
 exactly one, as well as candidate~$p$ also with multiplicity one if
 and only if $v_i$ \emph{does not} approve~$p$.  For each candidate $c
 \in C \setminus \{p\}$, we set its covering requirement to be
 $\max(\score_E(c)-s^*,0)$.  For~$p$ we set the covering requirement
 to~$\ell$.  It is easy to see that there is a way to ensure $p$'s
 victory by adding voters of total cost at most $B$ if and only if it
 is possible to solve the presented instance of \textsc{Weighted
   Multiset Multicover} with budget~$B$.  Since the constructed
 instance is, in fact, an instance of \textsc{Weighted Set
   Multicover}, we obtain an $\fpt$ algorithm.

 \paragraph{Approval-\textsc{\$CCAV}.}
 Consider an instance of Approval-\textsc{\$CCAV} with election $E =
 (C,V)$, where $C = \{p, c_1, \ldots, c_m\}$, $V = (v_1, \ldots,
 v_n)$, and $Q = (q_1, \ldots, q_{n'})$, with prices $\pi_1, \ldots,
 \pi_{n'}$ that one needs to pay to the respective voters from~$Q$ in
 order to convince them to participate in the election, and with
 budget $B$.  It is never useful to add a voter that does not approve
 candidate~$p$.  Adding a voter~$w$ (who approves~$p$) to the election
 has one decisive effect: it decreases the score difference between
 candidate~$p$ and each candidate that is not approved by~$w$.  Hence,
 we can express this instance as an instance of \textsc{Weighted
   Multiset Multicover} as follows.  For each voter $q_i \in Q$
 approving $p$, we form a multiset $S_i$ with weight $\pi_i$ that
 includes exactly the candidates not approved by $q_i$, each with
 multiplicity exactly one.  For each candidate $c \in C$, we set its
 covering requirement to be $\max(\score_E(c)-\score_E(p),0)$. It is
 easy to see that there is a way to ensure $p$'s victory by adding
 voters of total cost at most $B$ if and only if it is possible to
 solve the presented instance of \textsc{Weighted Multiset Multicover}
 with budget $B$. Since the constructed instance is, in fact, an
 instance of \textsc{Weighted Set Multicover}, we obtain an $\fpt$
 algorithm.
 
 \paragraph{\textsc{Approval-Weighted-CCDV} and \textsc{Approval-Weighted-CCAV}.}
 By analogous arguments as above we do the same construction as for
 Approval-\textsc{\$CCDV} (resp.\ Approval-\textsc{\$CCAV}) except
 that
 \begin{inparaenum}[(i)]
 \item we omit the weights of the multisets, and
 \item we set the multiplicity for each element in the multiset to the
   weight of the corresponding voter.
 \end{inparaenum}
 In consequence, we obtain instances of \textsc{Uniform Multiset
   Multicover}, which can be solved in $\fpt$ time.
\end{proof}

% \autoref{prop:vot2cover} allows us to translate the results from the world
% of covering problems to the world of approval elections. We obtain
% the following corollary. 

% \begin{corollary} \label{cor:votting_problems_fpt}
%   \textsc{Approval-\$CCAV}, \textsc{Approval-\$CCDV}, \textsc{Approval-\$Bribery},
%   \textsc{Approval-Weighted-CCAV}, and \textsc{Approval-Weighted-CCDV} are 
%   fixed-parameter tractable when parameterized by the number of candidates.
% \end{corollary}

% \todo[inline]{RN: Of course, it would be nice having a discussion 
% of potential applications of the FPT approx. scheme in the context of voting...\\
% PS $\to$ PF: \textsc{Uniform Multicover Multivoting} can model winner determination under CC. I cannot really think of anything requiring general \textsc{Multicover Multivoting} in voting. PF: how do you think? \\
% PS $\to$ All: A possible application of \textsc{Multicover Multivoting} could be related to fault tolerant facility location (multisets correspond to facilities and elements to clients; as in the standard fault tolerant facility location, different clients can have different requirements). I think this is quite a nice application, and would be willing to write a paragraph on that. NT: Indeed a nice application.}

% On the contrary,
On the other hand, it is either shown explicitly by \citet{FHH09} or
follows trivially that when the problems from the above proposition
have both prices and weights, then they are $\np$-hard already for two
candidates (that is, they are $\paranp$-hard with respect to the
number of candidates).

%\autoref{cor:votting_problems_fpt} provides fixed-parameter
%algorithms with respect to the number of candidates, for several
%voting problems.  The corresponding upper-bound
%%${2^m}^{2.5\cdot{2^m}+o({2^m})}\cdot n^{O(1)}$ on the running time,
%based on Lenstra's famous result about linear integer feasibility~\cite{Len83}
%(see also \cite{FT87,Kan87}) is roughly ${(2^m)}^{({2^m})}$,
%which is practically useless
%because already with five candidates there is no guarantee that the algorithm
%will terminate in our lifetime.
%Hence, we complement our theoretical findings by empirically studying the relation between
%the number of candidates and tractability.

%\todo[inline]{RN: I still consider the paragraph as problematic.
%In particular, I don't understand what the message is and find the last sentence particularly confusing.}
%In the conference version of our paper~\cite{BFNST15} we evaluated our algorithms empirically,
%focusing on the \eMIP-based algorithm as applied to Approval-\textsc{\$CCDV}, and on the standard
%ILP algorithm for this problem.
%In our preliminary experiments we found both algorithms to be extremely fast.
%Our empirical results also provided some evidence for the correlation between running time and
%number of candidates.

\section{Further Generalizations of the Results Related to Voting}\label{sec:generalizations}

% So far we have focused on approval elections but our results apply
% to ordinal elections as well.  
We now consider the ordinal model of elections, where each
voter's preferences are represented as an order, ranking the
candidates from the most preferred one to the least preferred one. For
example, for $C = \{c_1,c_2,c_3\}$, vote $c_1 \pref c_3 \pref c_2$ means that
the voter likes $c_1$ best, then $c_3$, and then
$c_2$.

There are many different voting rules for the ordinal election model.
Here we concentrate only on scoring rules.
A scoring rule for $m$ candidates is a nondecreasing vector $\alpha = (\alpha_1,
\ldots, \alpha_m)$ of integers. Each voter gives $\alpha_1$ points to
his or her most preferred candidate, $\alpha_2$ points to the second
most preferred candidate, and so on. Examples of scoring rules include
the Plurality rule, defined through vectors of the form $(1,0, \ldots,
0)$, $k$-Approval, defined through vectors with $k$~ones followed by
$m-k$~zeroes, and Borda count, defined through vectors of the form $(m-1,
m-2, \ldots, 0)$.

For each voting rule $\calR$ in the ordinal model, it is
straightforward to define $\calR$-\textsc{\$CCAV},
$\calR$-\textsc{\$CCDV}, and $\calR$-\textsc{\$Bribery}.
%% PF: commented for space
\iffalse
(compared to
the Approval variants of these problems, we represent voter
preferences through preference orders and we compute winners using
$\calR$ instead of Approval).
By our MIP technique from \autoref{thm:wsmc-fpt}, we
get the following.
\fi
Using our new framework, we obtain the following result.

\begin{theorem}\label{thm:ordinal}
  For every voting rule $\calR$ for which winner determination can be
  expressed through a set of integer linear inequalities over
  variables that indicate how many voters with each given preference
  order are in the election, $\calR$-\textsc{\$CCAV},
  $\calR$-\textsc{\$CCDV}, and $\calR$-\textsc{\$Bribery} are 
  fixed-parameter tractable when parameterized by the number of candidates.
\end{theorem}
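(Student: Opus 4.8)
The plan is to handle all three problems uniformly by exploiting that, with $m$ candidates, there are at most $m!$ distinct preference orders, so the number of voter ``types'' is bounded by a function of the parameter. I would group the voters by their preference order and, for each problem, introduce one integer variable per type recording how many voters of that type are affected by the chosen action. Writing $y_\sigma$ for the number of voters with order $\sigma$ in the \emph{final} election, the hypothesis of the theorem supplies a set of integer linear inequalities in the $y_\sigma$ that hold exactly when $p$ wins; since each $y_\sigma$ will be a linear expression in my action variables, these winner conditions translate directly into linear constraints of an \eMIP instance (linear functions are both convex and concave, so the framework admits arbitrary linear constraints, and an equality is just a pair of opposite inequalities). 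The budget is handled exactly as in \autoref{thm:wsmc-fpt}: for each type $\sigma$ I define a convex piecewise linear function $\mathrm{cost}_\sigma$ whose value on $x$ is the sum of the $x$ smallest prices among the type-$\sigma$ voters, and I impose the single constraint $\sum_\sigma \mathrm{cost}_\sigma(\cdot) \le B$. Applying \autoref{thm:general_theorem} with $O(m!)$ integer variables then yields fixed-parameter tractability.

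For $\calR$-\textsc{\$CCDV} and $\calR$-\textsc{\$CCAV} the setup mirrors \autoref{const:1}: in the deletion case I use a variable $x_\sigma \in \{0,\ldots,n_\sigma\}$ for the number of deleted type-$\sigma$ voters and set $y_\sigma = n_\sigma - x_\sigma$; in the addition case $x_\sigma$ ranges over the type-$\sigma$ voters available in the additional pool and $y_\sigma = n_\sigma + x_\sigma$. In both cases $\mathrm{cost}_\sigma$ is convex because we always prefer to delete (respectively, add) the cheapest voters of a type first, so the budget constraint fits the convex side of the \eMIP framework, while the winner inequalities stay linear in the $y_\sigma$.

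The more delicate case is $\calR$-\textsc{\$Bribery}, since a bribed voter may be re-cast with \emph{any} preference order. I would model this as a flow: introduce integer variables $d_\sigma \in \{0,\ldots,n_\sigma\}$ counting how many type-$\sigma$ voters are bribed (and thereby removed from their original order) together with integer variables $a_\tau \ge 0$ counting how many bribed voters are re-assigned order $\tau$, subject to the conservation constraint $\sum_\sigma d_\sigma = \sum_\tau a_\tau$; the final counts are then $y_\tau = n_\tau - d_\tau + a_\tau$. The total cost is $\sum_\sigma \mathrm{cost}_\sigma(d_\sigma) \le B$ with the same convex functions as before, and the winner condition is again linear in the $y_\tau$. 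Crucially, unlike the Approval reduction in \autoref{prop:vot2cover}, no guessing of $p$'s final score is needed here, because every candidate's score is already a linear expression in the $y_\tau$.

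I expect the main obstacle to be the bribery modeling: one must argue that the flow formulation loses nothing, i.e., that an optimal bribery can always be taken to bribe the cheapest voters of each type (justifying the convex $\mathrm{cost}_\sigma$) and that re-assigning bribed voters to orders independently of which voters were bribed is without loss of generality. Both facts follow the same ``smaller cost is always preferable, so the convex upper bound is met with equality'' reasoning used in the proof of \autoref{thm:general_theorem}, but they are the only place where a genuine combinatorial argument, rather than a mechanical translation, is required. Once this is in place, the number of integer variables is $O(m!)$ and the number of pieces per cost function is at most~$n$, so \autoref{thm:general_theorem} delivers the claimed $\fpt$ running time.
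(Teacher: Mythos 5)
Your proposal is correct and takes essentially the same route as the paper: group voters by preference order, express the final per-order voter counts as linear expressions in per-type action variables inside the winner-determination inequalities, encode the budget through convex ``sum of the cheapest prices'' functions, and invoke \autoref{thm:general_theorem} with $O(m!)$ integer variables. The paper writes out only the \textsc{\$CCDV} case and asserts the other two are analogous, so your explicit treatment of \textsc{\$CCAV} and the flow formulation for \textsc{\$Bribery} merely fill in details the paper leaves implicit.
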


\begin{proof}
The proof follows the same structure as that of \autoref{thm:wsmc-fpt}.
We will present the proof only for $\calR$-\textsc{\$CCDV};
the other cases follow by applying the same approach.
Let us consider an instance of $\calR$-\textsc{\$CCDV} with the set of candidates $C = \{p, c_1, \ldots, c_m\}$,
the collection of voters $V = (v_1, \ldots, v_n)$, prices $\pi_1, \ldots, \pi_n$, and with budget~$B$.

Let $X$ be the set of integer variables which indicate how many voters with each given preference
order are in the election. By $x_{\sigma}$ we denote the variable from~$X$ which corresponds to
the preference ranking~$\sigma$.
Clearly, the size of $X$ is upper-bounded by~$m!$, i.e., by a function of the number of the candidates. 
Let $S$ be the set of inequalities over variables from $X$ that encode that $p$ is a winner in the election.

We construct an integer program with convex transformations as
follows. For each preference order $\sigma$ we introduce one integer
variable $c_{\sigma}$.  Intuitively, this variable indicates how many
voters with the preference order $\sigma$ we need to remove from the
election.  Additionally, we introduce a function $f_\sigma$ such that
$f_\sigma(c_{\sigma})$ is the total price of $c_{\sigma}$ least
expensive voters whose preference order is~$\sigma$. For each
$x_{\sigma} \in X$ we replace $x_{\sigma}$ in $S$ with the number of voters
from $V$ whose preference ranking is $\sigma$ minus $c_{\sigma}$, and
we add a constraint enforcing that this difference is greater or equal
to zero.  Finally, we add the budget constraint $\sum_{\sigma}
f_\sigma(c_{\sigma}) \leq B$. It is apparent that our ILP is feasible
if and only if the answer to the original instance is ``yes''.  We
solve it in $\fpt$ time via \autoref{thm:general_theorem}.
\end{proof}

For a more detailed description of the class of voting rules where
``winner determination can be expressed through integer linear
inequalities,'' we point the reader to the works of Dorn and
Schlotter~\cite{DS12} or of Faliszewski et al.~\cite{FHH11}.  In
particular, \autoref{thm:ordinal} applies to all scoring rules.
This, and the results from the previous section, resolves an issue
dating back to the work of Faliszewski et al.~\cite[Theorem 4.4,
Theorem 4.13; the conference version of their work was published in
2006]{FHH09}, who have shown that \textsc{\$Bribery} is in $\xp$ for
approval voting and for scoring protocols (for the parameterization by
the number of candidates).\footnote{They did not speak of
  $\xp$-membership explicitly, but this is exactly what they have
  shown. Bredereck et al.~\cite{BCFetal14} popularized the issue of
  resolving if \textsc{\$Bribery} problems parameterized by the number
  of candidates are in $\fpt$ or are $\mathrm{W}[\cdot]$ hard, leading
  in particular to our solution and to the slightly later work of
  Kouteck\'y et al.~\cite{KKM17}.}  Until our work, the exact
parameterized complexity of these problems was unknown.

% As an important
% example, \autoref{thm:ordinal} applies to all scoring rules.

Our framework also allows to partially resolve an open problem posed by Bredereck et
al.~\cite{BCFNN16} regarding \textsc{Shift Bribery}.
In this problem we are given an election and a preferred candidate~$p$, and the goal
is to ensure $p$'s victory by shifting $p$ forward in some of
the votes (the cost of each shift depends on the number of positions
by which we shift $p$). Under the ``sortable prices assumption'', 
%we know
%that voters with 
voters with the same preference orders can be sorted so that if
voter~$v'$ precedes voter~$v''$, then we know that shifting $p$ by
each given number of positions~$i$ in the vote of~$v'$ costs at most
as much as doing the same in the vote of~$v''$.  Using this
assumption, we obtain the following result (all-or-nothing prices are
a special case of sortable prices where we always shift $p$ to the top
of a given vote or we leave the vote unchanged).
%  (due to this
% assumption, introduced by Bredereck et al.~\cite{BCFNN14}, the problem has the
% monotonicity property on which our MIP approach relies;
% all-or-nothing prices are a special case of sortable prices where we
% always shift $p$ to the top of a given vote or we leave the vote
% unchanged).

\begin{theorem}\label{thm:sb}
  For Borda (and for Maximin and Copeland voting rules),
  \textsc{Shift Bribery} for sortable price functions and for
  all-or-nothing price functions is fixed-parameter tractable when parameterized by the
  number of candidates.
\end{theorem}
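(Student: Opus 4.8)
The plan is to mimic the reductions behind \autoref{thm:ordinal} and \autoref{thm:wsmc-fpt}: group the voters by their preference order, encode the winner condition by linear inequalities over the group counts, and push all price information into piecewise-linear convex cost functions so that \autoref{thm:general_theorem} applies. Since there are at most $(m+1)!$ distinct preference orders over $C=\{p,c_1,\dots,c_m\}$, grouping voters by order yields only $f(m)$ groups, and shifting $p$ upward by a fixed number of positions in every vote of a given group has a fixed, easily computable effect on the scores (for Borda, shifting a vote by $j$ gives $p$ exactly $j$ extra points and costs each of the $j$ overtaken candidates one point). Thus for Borda the statement ``$p$ wins'' is linear in the shift counts. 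For Maximin and Copeland, whose winner conditions are not directly linear, I would first \emph{guess} the relevant summary of $p$'s final standing (e.g.\ $p$'s final Copeland score, or the outcome of each pairwise contest involving $p$); there are only $f(m)$ such guesses, and after fixing one the winner condition again becomes a system of linear inequalities in the counts, exactly as required by \autoref{thm:ordinal}.

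I would dispatch the all-or-nothing case first, as it is essentially \wsmc. For each order $\sigma$ introduce a single integer variable $k_\sigma$ counting how many $\sigma$-voters have $p$ moved to the top; the induced score changes are linear in the $k_\sigma$, and by sortability the cheapest way to bribe $k_\sigma$ of them is to take the $k_\sigma$ least expensive ones, so the incurred cost is $f_\sigma(k_\sigma)$, the sum of the $k_\sigma$ smallest prices in group $\sigma$. As in \autoref{thm:wsmc-fpt}, $f_\sigma$ is convex (its successive increments are the sorted prices, hence nondecreasing), so the budget constraint $\sum_\sigma f_\sigma(k_\sigma)\le B$ together with the linear winner constraints forms an \eMIP instance with $f(m)$ integer variables, solved in $\fpt$ time by \autoref{thm:general_theorem}.

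For general sortable prices the difficulty is that a voter may now be shifted by several different amounts, so a single count per order no longer suffices. Here I would use, for each order $\sigma$ and each admissible shift length $j$, a \emph{cumulative} integer variable $u_{\sigma,j}$ recording how many $\sigma$-voters are shifted by \emph{at least} $j$ positions, enforcing $u_{\sigma,1}\ge u_{\sigma,2}\ge\cdots$. These cumulative counts still express the score effect linearly: a $\sigma$-voter shifted by at least $j$ positions moves $p$ past the candidate sitting $j$ places above it, so each candidate's score change is a linear function of the $u_{\sigma,j}$. The crux is the cost encoding. By sortability there is one voter ordering per group that is simultaneously nondecreasing in price for \emph{every} shift length, and a routine exchange argument shows that in some optimal solution the shifted voters of each group form a prefix of this ordering, with the cheapest voters absorbing the largest shifts. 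Under this monotone realization the total price telescopes into $\sum_{\sigma,j} D_{\sigma,j}(u_{\sigma,j})$, where $D_{\sigma,j}(u)$ is the cost of raising the shift level from $j-1$ to $j$ on the $u$ cheapest voters; one then checks that sortability makes each $D_{\sigma,j}$ convex, so once more the budget constraint is a sum of univariate convex functions and \autoref{thm:general_theorem} applies.

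The main obstacle is precisely this last point: arguing that the sortable-prices assumption forces the monotone, prefix-shaped realization to be cost-optimal and, at the same time, that the resulting per-level cost functions $D_{\sigma,j}$ are convex, so that they may legitimately sit on the constraint side of an \eMIP. This is the only place where ``sortable'' (rather than arbitrary) prices are genuinely used, and all-or-nothing prices are simply the degenerate case with a single shift length, where convexity is immediate. Once this is established, the variable count is $O((m+1)!\cdot m)=f(m)$ integer variables together with a polynomial number of auxiliary real variables and pieces, so \autoref{thm:general_theorem} delivers the claimed $\fpt$ running time; the Maximin and Copeland cases then follow by running the whole construction inside the $f(m)$-fold guessing loop, exactly as in the proof of \autoref{prop:vot2cover}.
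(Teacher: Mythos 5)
Your all-or-nothing argument and the guessing step for Maximin and Copeland are fine---they amount to the \autoref{thm:ordinal}-style encoding that the paper itself relies on. The genuine gap is in the sortable-prices case, exactly at the step you flag as ``the main obstacle,'' and neither of the two claims you defer there is true under the paper's definition of sortability. Sortable only means that within a group the price functions are \emph{pointwise} ordered, $\pi_{v'}(i)\le\pi_{v''}(i)$ for every shift length $i$; it does not give the increasing-differences property $\pi_{v'}(j)-\pi_{v'}(j-1)\le\pi_{v''}(j)-\pi_{v''}(j-1)$, which is what both your exchange argument and the convexity of $D_{\sigma,j}$ secretly require. Concretely, take one group with two voters whose prices at shifts $(0,1,2)$ are $\pi_{v_1}=(0,0,10)$ and $\pi_{v_2}=(0,5,10)$; this family is sortable with $v_1$ preceding $v_2$. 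Suppose the winner constraints can only be met by shifting one of these voters by at least one position and the other by at least two (easy to arrange with padding). The optimum shifts $v_1$ by one and $v_2$ by two, at cost $0+10=10$, whereas your prefix-monotone realization (cheapest voter absorbs the largest shift) costs $\pi_{v_1}(2)+\pi_{v_2}(1)=15$: swapping shifts toward the cheaper voter strictly \emph{increases} the cost, so the ``routine exchange argument'' fails. For the same instance, $D_{\sigma,2}$ has successive increments $10$ and $5$, so it is not convex. Worse, since your budget constraint $\sum_{\sigma,j}D_{\sigma,j}(u_{\sigma,j})\le B$ prices every profile at the cost of its monotone realization, it overestimates the true cost ($15$ versus $10$ here), and the resulting \eMIP is infeasible for budget $B=10$ even though the instance is a yes-instance. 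So the sortable half of your construction is not merely under-justified; as stated, it gives an incorrect algorithm.

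For comparison, the paper's own proof never attempts such a telescoped per-level cost: it is only a sketch that reuses the rephrasing of Bredereck et al.~\cite{BCFNN16} (whose $\fpt$ approximation scheme this theorem upgrades) and replaces their bounded search by an \eMIP, with the technical details explicitly omitted. Your counterexample in fact shows why a decomposition into univariate convex functions of cumulative counts cannot work: the minimum cost of realizing a shift profile inside a group is a genuinely multivariate function of that profile. If you want a self-contained argument, a clean repair is to keep your integer profile variables $n_{\sigma,j}$ (equivalently $u_{\sigma,j}$) for the linear winner constraints, but to encode which voter receives which shift by polynomially many \emph{real-valued} variables $a_{v,j}\ge 0$ subject to $\sum_j a_{v,j}=1$ for each voter $v$, $\sum_{v\in\sigma}a_{v,j}=n_{\sigma,j}$ for each $\sigma$ and $j$, and $\sum_{v,j}\pi_v(j)\,a_{v,j}\le B$. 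For any fixed integral profile these constraints describe a transportation polytope, whose vertices are integral, so the minimum of the budget expression over it is attained at an integral assignment; hence this mixed program, which has only $f(m)$ integer variables, is feasible if and only if the bribery instance is a yes-instance, and \autoref{thm:mip} applies directly (note that this step needs no convexity, and indeed does not use sortability at all).
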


\citet{BCFNN16} gave an $\fpt$ approximation scheme for the problems
from \autoref{thm:sb}; we can use part of their algorithm and
apply our new framework in order to derive an exact and not only 
approximate solution.  Their algorithm rephrases the problem and
then applies a bounded search through the solution space.  We can use
their rephrasing but replace the search by solving a \eMIP instance.
We omit technical details since recently \citet{KKM17} showed
fixed-parameter tractability of the \textsc{Swap Bribery} problem
parameterized by the number of candidates, as part of a very general
result using the $n$-fold IP technique. \textsc{Swap Bribery} is a
generalization of \textsc{Shift Bribery}, so the result of Kouteck\'y
et al.~\cite{KKM17} is stronger than the one given above.

\section{Discussion \& Outlook}\label{sec:conclusions}

We have proposed an extension of Lenstra's famous result for solving
ILPs.  In our extended formulation, one can replace any integer
variable with its simple piecewise linear transformation---this
transformation needs to be either convex or concave, depending on the
position of the variable in the ILP.  We have shown that such extended
ILPs can still be solved in $\fpt$ time with respect to the number of
integer variables, as long as there are at most polynomially many
pieces.

We have demonstrated several applications of our general result which
relate to classic covering problems and to selected voting problems.
Most notably, we have proven that \textsc{Weighted Set Multicover} is
fixed-parameter tractable when parameterized by the number of elements
to cover. Further, building upon our general result, but using a more
technically involved argument, we have proved the existence of an
$\fpt$ approximation scheme for \textsc{Multiset Multicover}, also for
the parameterization by the number of elements.  We have also explained
how our general results can be used in studies on control and bribery
in elections---we have shown that certain variants of these problems
are in $\fpt$ when parameterized by the number of candidates. In
particular, we have resolved the parameterized complexity of some
problems that were open for the last ten years or so.

%Additionally, we have considered weighted Approval elections with few voters. 

Our paper leads to several possible directions for future work.
First, unfortunately, while Lenstra's algorithm is a very powerful
tool for proving FPT membership, it might be too slow in practice.
Thus, as pointed out by Bredereck et al.~\cite{BCFetal14}, each time
an FPT result is achieved through an application of Lenstra's result,
it is natural to ask whether one can derive the same result through a
direct, combinatorial algorithm.  Coming up with such a direct
algorithm usually seems very difficult.  In practice, one would
probably not use Lenstra's algorithm for solving MIPs, but, instead,
one of the off-the-shelf optimized heuristics. In the conference
version of this paper~\cite{BFNST15} we provided a preliminary
empirical comparison of the running times of the MIP-based algorithm
(using an off-the-shelf MIP solver instead of Lenstra's algorithm) and
an ILP-based algorithm that reduces our problems directly to integer
linear programming (basically without ``exploiting'' the
parameter). Our results suggest that $\fpt$ algorithms based on
solving MIPs can be very efficient in practice.  A more thorough
experimental analysis of these and similar questions would help in understanding
the real power and limitations of the techniques based on
MIPs, thus we believe it is an important research direction.  Second,
our work advances a fairly modest literature on FPT approximation schemes.
It would be very interesting to further explore the (practical) relevance of
such algorithms. 
%for voting scenarios and for other
%problems (not only) from computational social choice.

\subsection*{Acknowledgments}
We thank Martin Kouteck\'{y} for his extremely helpful input and feedback 
concerning related work.

Robert Bredereck was from September 2016 to September 2017 on
postdoctoral leave at the University of Oxford, supported by the DFG
fellowship BR 5207/2 and at project start in early 2015 supported by DFG project PAWS (NI 369/10).
Piotr Skowron was supported by a Humboldt Research Fellowship for Postdoctoral Researchers.
Piotr Faliszewski was supported by DFG project PAWS (NI 369/10) during his stay at TU
Berlin and by AGH University grant 11.11.230.337 (statutory research)
afterward.  Nimrod Talmon was supported by DFG, Research Training
Group ``Methods for Discrete Structures'' (GRK 1408), while the author
was affiliated with TU Berlin, where most of the work was done. This
work was also partly supported by COST Action IC1205 on Computational
Social Choice.

%\begin{itemize}
% \item Extend experiments: if one identifies instances where both approach
%       fail to compute optimal solutions in reasonable time, then one should
%       test the (FPT)-approximation algorithms or combinations.
%\end{itemize}

% Important conclusions that tell how important this work is to the
% development of even more important from the global point of view
% progression of the important h indices of theas important 
% coauthors of this important paper that presents even more important
% contributions to the world research on the important topic of 
% computational social choice, which is an important branch of both
% computer science and economic, both equally important.
% Important conclusions that tell how important this work is to the
% development of even more important from the global point of view
% progression of the important h indices of theas important 
% coauthors of this important paper that presents even more important
% contributions to the world research on the important topic of 
% computational social choice, which is an important branch of both
% computer science and economic, both equally important.

\bibliographystyle{abbrvnat}
\bibliography{bibliography}

\end{document}